\newcommand{\tool}{\sf FISCHER}
\newcommand{\masked}{\mathfrak m}
\newcommand{\origin}{\mathfrak o}
\newcommand{\myvec}[1]{{\vec{\bf #1}}}
\definecolor{officegreen}{rgb}{0.0, 0.5, 0.0}
\begin{document}

\title{Automated Verification of Correctness for Masked Arithmetic Programs\thanks{This work is supported by the National Natural Science Foundation of China (62072309), CAS Project for Young Scientists in Basic
Research (YSBR-040), ISCAS New Cultivation Project (ISCAS-PYFX-202201),
an oversea grant from the State Key Laboratory of Novel Software Technology, Nanjing University (KFKT2022A03), and Birkbeck BEI School Project
(EFFECT).}}

\author{Mingyang Liu\inst{1} \and
Fu Song\inst{1,2,3}\textsuperscript{(\Envelope)}  \and
Taolue Chen\inst{4}}

\authorrunning{Liu et al.}

\institute{ShanghaiTech University,  Shanghai 201210, China  
\and
Institute of Software, Chinese Academy of Sciences \& University of Chinese Academy of
Sciences, Beijing 100190, China  \and
Automotive Software Innovation Center, Chongqing 400000, China \and
Birkbeck, University of London, London WC1E 7HX, United Kingdom} 

\maketitle              


\begin{abstract}
Masking is a widely-used effective countermeasure against power side-channel attacks for implementing cryptographic algorithms. Surprisingly, few formal verification techniques have addressed a fundamental question, i.e., whether the masked program and the original (unmasked) cryptographic algorithm are functional equivalent. In this paper, we study this problem for masked arithmetic programs over Galois fields of characteristic 2. We propose an automated approach based on term rewriting, aided by random testing and SMT solving. The overall approach is sound, and complete under certain conditions which do meet in practice. We implement the approach as a new tool {\tool} and carry out extensive experiments on various benchmarks. The results confirm the effectiveness, efficiency and scalability of our approach. Almost all the benchmarks can be proved for the first time by the term rewriting system solely. In particular, {\tool} detects a new flaw in a masked implementation published in EUROCRYPT 2017.
%
\end{abstract}
%
%

\section{Introduction}

Power side-channel attacks~\cite{kocher1996timing} can infer secrecy by statistically analyzing the
power consumption during the execution of cryptographic programs. The victims include
implementations of almost all major cryptographic algorithms, e.g., DES~\cite{kocher1999differential}, AES~\cite{ProuffRB09}, RSA~\cite{GP99}, Elliptic curve cryptography~\cite{OrsOP03,LFK18a}
and post-quantum cryptography~\cite{RaviRCB19,SchambergerRSW20}. 
To mitigate the threat, cryptographic algorithms are often implemented via \emph{masking}~\cite{ishai2003private}, which divides each secret value into $(d+1)$ shares by randomization, where $d$ is a given masking order.
However, it is error-prone to implement secure and correct masked implementations for non-linear functions (e.g., finite-field multiplication, module addition and S-Box), which are prevalent in  cryptography. Indeed, published implementations of AES S-Box that have been proved secure via paper-and-pencil~\cite{rivain2010provably,carlet2012higher,KHL11} were later shown to be vulnerable to power side-channels when 
$d$ is no less than $4$~\cite{CPRR13}.

While numerous formal verification techniques have been proposed to prove resistance of masked cryptographic programs against power side-channel attacks (e.g.,~\cite{EWS14b,zhang2018sc,BGIKMW18,GaoXZSC19,BBDFGSZ16,GaoZSW19,GaoXSC21,GaoXSZSC22}),
one fundamental question which is largely left open is the (functional) correctness of the masked cryptographic programs,
i.e., whether a masked program and the original (unmasked) cryptographic algorithm are actually functional equivalent.
%
It is conceivable to apply general-purpose program verifiers to masked cryptographic programs. Constraint-solving based approaches are available,
for instance, Boogie~\cite{barnett2005boogie} generates constraints via weakest precondition reasoning which then invokes SMT solvers;
SeaHorn~\cite{gurfinkel2015seahorn} and CPAChecker~\cite{beyer2011cpachecker} adopt model checking by utilizing SMT or CHC solvers.
More recent work (e.g., CryptoLine~\cite{tsai2017certified,PolyakovTWY18,LiuSTWY19,FuLSTWY19}) 
resorts to computer algebra, e.g., to reduce the problem to the ideal membership problem. 
The main challenge of applying these  techniques to masked cryptographic programs lies in the presence of finite-field multiplication, affine transformations and bitwise exclusive-OR (XOR). 
For instance, 
finite-field multiplication is not natively supported by the current SMT or CHC solvers,
and 
the increasing number of bitwise XOR operations causes the infamous state-explosion problem.
Moreover, to the best of our knowledge, current computer algebra systems do not 
provide the full support required by verification of masked cryptographic programs. 




\noindent{\bf Contributions.}
We propose a novel, term rewriting based approach to efficiently check
whether a masked program and the original (unmasked) cryptographic algorithm  (over Galois fields of characteristic 2) are functional equivalent.
Namely, we provide a term rewriting system (TRS) which 
can handle affine transformations, bitwise XOR, and finite-field multiplication.
The verification problem is reduced to checking whether a term can be rewritten to normal form $0$.
This approach is sound, i.e., once we obtain $0$, we can claim functional equivalence.
In case the TRS reduces to a normal form which is different from $0$, most likely they are \emph{not} functional equivalent, but 
a false positive is possible. We further resort to random testing and SMT solving by directly analyzing the obtained normal form.
As a result, it turns out that the overall approach is 
complete 
if no uninterpreted functions are involved in the normal form.

We implement our approach as a new tool {\tool} (\textbf{F}unctional\textbf{I}ty of ma\textbf{S}ked \textbf{C}ryptograp\textbf{H}ic program verifi\textbf{ER}), based on the LLVM framework~\cite{lattner2004llvm}.
We conduct extensive experiments on various masked cryptographic program benchmarks. 
The results show that our term rewriting system solely is able to prove almost all the benchmarks.
{\tool} is also considerably more efficient than
the general-purpose verifiers SMACK~\cite{rakamaric2014smack}, SeaHorn, CPAChecker, and  Symbiotic~\cite{chalupa2021symbiotic},
cryptography-specific verifier CryptoLine,  
as well as a straightforward approach that directly reduces the verification task to SMT solving.
For instance, our approach is able to handle masked implementations of finite-field multiplication with masking orders up to $100$  in less than 153 seconds,
while none of the compared approaches can handle masking order of $3$ in 20 minutes.

In particular, for the first time we detect a flaw in a masked implementation of finite-field multiplication published in EUROCRYPT 2017~\cite{barthe2017parallel}.
The flaw is tricky, as it only occurs for the masking order $d\equiv 1 \mod 4$.\footnote{This flaw has been confirmed by an author of~\cite{barthe2017parallel}.} This finding highlights the importance of the correctness verification of masked programs,
which has been largely overlooked, but of which our work provides an effective solution.

Our main contributions can be summarized as follows.
\begin{itemize}
    \item We propose a term rewriting system for automatically proving the functional correctness of
    masked cryptographic programs;
    \item We implement a tool {\tool} by synergistically integrating the term rewriting based approach, random testing and SMT solving; 
    \item We conduct extensive experiments, confirming the effectiveness, efficiency, scalability and  applicability of our approach.
\end{itemize}

\noindent
{\bf Related Work}.
Program verification has been extensively studied for decades. Here we mainly focus on their application in cryptographic programs, for which 
some general-purpose program verifiers have been adopted. 
Early work~\cite{almeida2017jasmin} uses Boogie~\cite{barnett2005boogie}.
HACL*~\cite{ZinzindohoueBPB17} uses F*~\cite{AhmanHMMPPRS17} which verifies programs by a combination of SMT solving and interactive  proof assistants.
Vale~\cite{bond2017vale} 
uses F* and  Dafny~\cite{leino2010dafny} where Dafny
harnesses Boogie for verification. 
Cryptol~\cite{tomb2016automated} checks equivalence between machine-readable cryptographic specifications and real-world implementations via SMT solving.
As mentioned before, computer algebra systems (CAS) have also been used 
for verifying cryptographic programs and arithmetic circuits, by reducing to the ideal membership problem together with 
SAT/SMT solving. Typical work includes CryptoLine and AMulet~\cite{kaufmann2019verifying,KaufmannB21}.
However, as shown in Section~\ref{sec:comparsion}, neither general-purpose verifiers (SMACK with Boogie and Corral, SeaHorn,
CPAChecker and Symbiotic) nor the CAS-based verifier CryptoLine
is sufficiently powerful to verify masked cryptographic programs.
%
Interactive proof assistants (possibly coupled with SMT solvers) have also been used to verify unmasked cryptographic programs
(e.g.,~\cite{BartheDGKSS13,Affeldt13,Appel15,MyreenC13,MyreenG07,chen2014verifying,ErbsenPGSC19}).
Compared to them, our approach is highly automatic, which is more acceptable and easier to use for general software developers. 

\smallskip
\noindent{\bf Outline}.
Section~\ref{sec:prel} recaps preliminaries. 
Section~\ref{sec:lang} presents a language on which the cryptographic program is formalized.
Section~\ref{sec:MotivatingandOverview} gives an example and an overview of our approach.
Section~\ref{sec:model} and Section~\ref{sec:algs} introduce the term rewriting system
and verification algorithms.
Section~\ref{sec:exp} reports experimental results.
We conclude in Section~\ref{sec:concl}. 
The source code of our tool and benchmarks are available at
\url{https://github.com/S3L-official/FISCHER}.

\section{Preliminaries}\label{sec:prel}

For two integers $l,u$ with $l\leq u$, $[l,u]$ denotes the set of integers $\{l,l+1, \cdots,u\}$.

\smallskip
\noindent{\bf Galois Field}. A \emph{Galois field} $\mathbb{GF}(p^n)$ comprises polynomials $a_{n-1}  X^{n-1}+\cdots + a_1  X^1+ a_0 $ over $\mathbb{Z}_p=[0,p-1]$, where $p$ is a prime number, $n$ is a positive integer, and $a_i\in \mathbb{Z}_p$. (Here $p$ is the \emph{characteristic} of the field, and
$p^n$ is the \emph{order} of the field.)
Symmetric cryptography (e.g., DES~\cite{nist1999des}, AES~\cite{daemen1999aes}, 
SKINNY~\cite{beierle2016skinny}, PRESENT~\cite{bogdanov2007present})
and bitsliced implementations of asymmetric cryptography (e.g.,~\cite{bronchain2022bitslicing}) intensively uses $\mathbb{GF}(2^n)$. 
Throughout the paper, $\mathbb{F}$ denotes the Galois field $\mathbb{GF}(2^n)$ for a fixed $n$, and $\oplus$ and $\otimes$ denote the addition and multiplication on $\mathbb{F}$, respectively.
Recall that $\mathbb{GF}(2^n)$ can be constructed from the quotient ring of the polynomial ring $\mathbb{GF}(2)[X]$
with respect to the ideal generated by an irreducible polynomial $P$ of degree $n$.
Hence, multiplication is the product of two polynomials
modulo $P$ in $\mathbb{GF}(2)[X]$ and addition is bitwise exclusive-OR (XOR) over the binary representation of polynomials.
For example, AES uses $\mathbb{GF}(256)=\mathbb{GF}(2)[X]/(X^8+X^4+X^3+X+1)$. Here $n=8$ and $P=X^8+X^4+X^3+X+1$.

\smallskip
\noindent{\bf Higher-Order Masking}.
To achieve order-$d$ security against power side-channel attacks under certain leakage models,
masking is usually used~\cite{shamir1979share,ishai2003private}. 
Essentially, masking partitions each secret value into (usually $d+1$) shares so that knowing at most $d$ shares cannot infer
any information of the  secret value, called \emph{order-$d$ masking}.
%
In Boolean masking,
a value $a\in \mathbb{F}$ is divided into shares $a_0, a_1, \ldots, a_d\in\mathbb{F}$
such that $a_0 \oplus a_1 \oplus \ldots \oplus a_d = a$. Typically,
$a_1,  \ldots, a_d$ are random values and $a_0= a\oplus a_1 \oplus \ldots \oplus a_d$.
The tuple $(a_0, a_1, \ldots, a_d)$, denoted by $\myvec{a}$, is called an \emph{encoding} of $a$.
We write $\bigoplus_{i \in [0,d]} {\myvec{a}}_i$ (or simply $\bigoplus \myvec{a}$) for $a_0 \oplus a_1 \oplus \ldots \oplus a_d$.
Additive masking can be defined similarly to Boolean masking, where $\oplus$ is replaced by the module arithmetic addition operator.
In this work, we focus on Boolean masking as the XOR operation is more efficient to implement.

To implement a masked program, for each operation in the cryptographic algorithm,  
a corresponding operation on shares is required. 
As we will see later, when the operation is affine (i.e. the operation $f$ satisfies $f(x \oplus y) = f(x) \oplus f(y)\oplus c$ for some constant $c$),  
the corresponding operation is simply  
to apply the original operation on each share $a_i$ in the encoding $(a_0, a_1, \ldots, a_d)$.
However, for non-affine operations (e.g., multiplication and addition), it is a very difficult
task and error-prone~\cite{CPRR13}.
Ishai et al.~\cite{ishai2003private}   
proposed the first
masked implementation of multiplication, but 
limited to the domain $\mathbb{GF}(2)$ \emph{only}. The number of the required random values and operations
is not optimal and is known to be vulnerable in the presence of glitches because the electric signals propagate at different speeds in
the combinatorial paths of hardware circuits. Thus, various follow-up papers proposed ways to implement higher-order masking
for the domain $\mathbb{GF}(2^n)$ and/or optimizing the computational complexity, e.g.,~\cite{rivain2010provably,belaid2016randomness,barthe2017parallel,gross2017reconciling,cassiers2020trivially},
all of which are referred to as ISW scheme in this paper.
In another research direction, new glitch-resistant Boolean masking schemes have been proposed, e.g.,
Hardware Private Circuits (HPC1 \& HPC2)~\cite{cassiers2020hardware}, Domain-oriented Masking (DOM)~\cite{gross2016domain} and Consolidating Masking Schemes (CMS)~\cite{reparaz2015consolidating}.
In this work, we are interested in automatically proving the correctness of the masked programs.

\section{The Core Language}\label{sec:lang}
In this section, we first present the core language \textsf{MSL},  given in Figure~\ref{fig:msl-syntax}, based on which the verification problem is formalized.

\begin{figure}[t]
	\centering
  \includegraphics[width=1\textwidth]{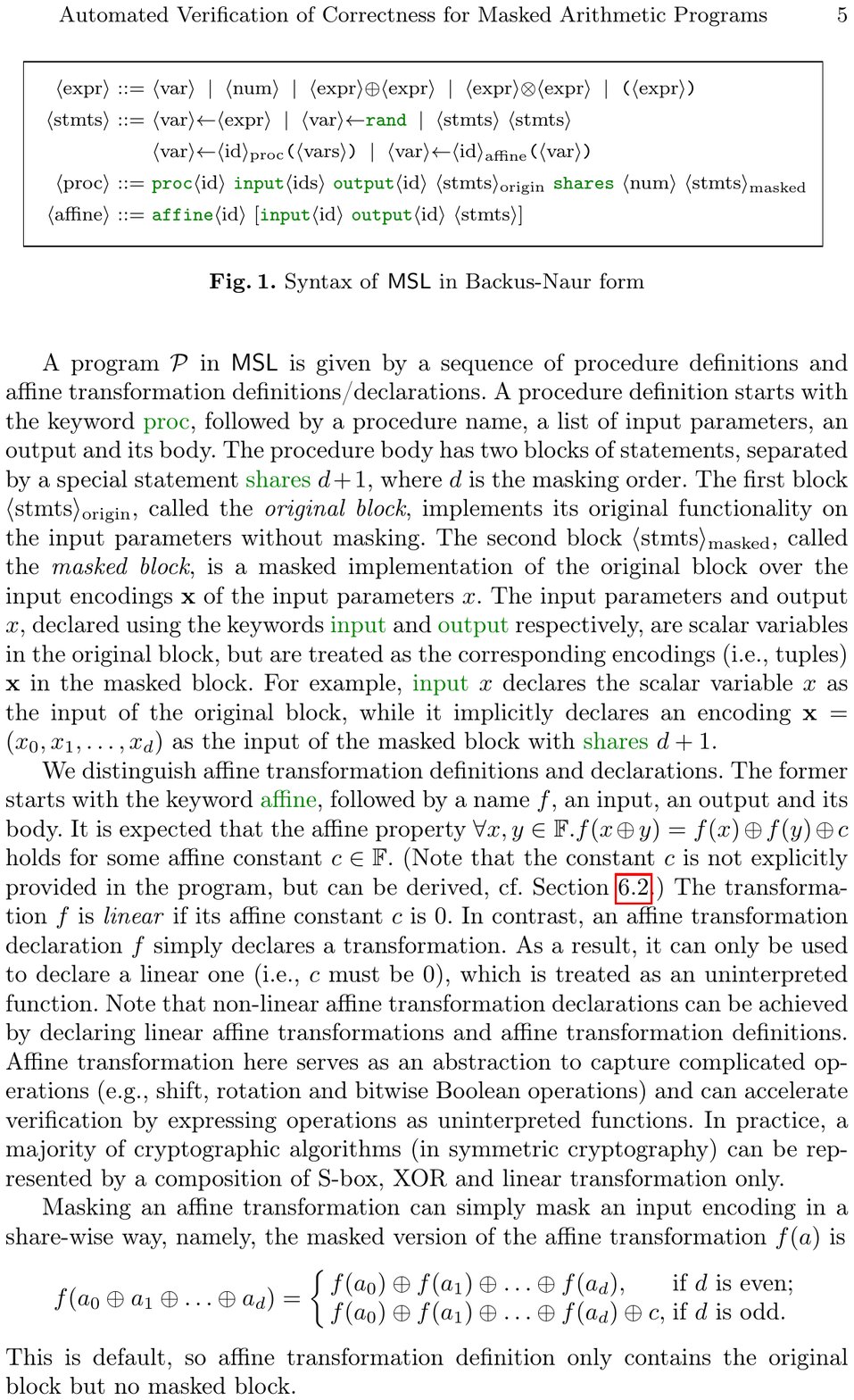}
\caption{Syntax of \textsf{MSL} in Backus-Naur form} 
\label{fig:msl-syntax}
\end{figure}

A program $\mathcal{P}$ in \textsf{MSL} is given by a sequence of procedure definitions and affine transformation definitions/declarations.
A procedure definition starts with the keyword \textcolor{officegreen}{proc}, followed by a procedure name, a list of input parameters, an output and its body.
The procedure body has two blocks of statements, separated by a special statement \textcolor{officegreen}{shares} $d+1$, where
$d$ is the masking order.
The first block $\langle$stmts$\rangle_\text{origin}$, called the \emph{original block}, implements its original functionality on the input parameters without masking.
The second block $\langle$stmts$\rangle_\text{masked}$, called the \emph{masked block}, is a masked implementation of the original block over the input encodings $\myvec{x}$ of the input parameters $x$. The input parameters and output $x$, declared using the keywords \textcolor{officegreen}{input}
and \textcolor{officegreen}{output} respectively, are scalar variables in the original block, but are treated as the corresponding
encodings (i.e., tuples) $\myvec{x}$ in the masked block.
For example, \textcolor{officegreen}{input} $x$ declares the scalar variable $x$ as the input of the original block,
while it implicitly declares an encoding $\myvec{x}=(x_0, x_1, \ldots, x_d)$ as the input of the masked block with \textcolor{officegreen}{shares} $d+1$.

We distinguish affine transformation definitions and declarations. The former starts with the keyword \textcolor{officegreen}{affine}, followed by a name  $f$, an input, an output and its body. It is expected that the affine property $\forall x, y\in \mathbb{F}. f(x \oplus y) = f(x) \oplus f(y) \oplus c$ holds for some affine constant $c\in\mathbb{F}$.
(Note that the constant $c$ is not explicitly provided in the program, but can be derived, cf.\ Section~\ref{sect:const}.)
The transformation $f$ is \textit{linear} if its affine constant $c$ is $0$.
In contrast, an affine transformation declaration $f$ simply declares
a transformation. As a result, it can only be used to declare
a linear one (i.e., $c$ must be 0),  
which is treated as an uninterpreted function.
Note that non-linear affine transformation declarations
can be achieved by declaring linear affine transformations and affine transformation definitions.
Affine transformation here serves as an abstraction to capture complicated operations (e.g.,  shift, rotation and bitwise Boolean operations) and can accelerate verification by expressing operations as uninterpreted functions. In practice, a majority of cryptographic algorithms (in symmetric cryptography) can be represented by a composition of S-box, XOR and linear transformation only.

Masking an affine transformation can simply mask  an input encoding in a share-wise way, namely,
the masked version of the affine transformation $f(a)$ is 
\begin{center}
$f(a_0 \oplus a_1 \oplus \ldots \oplus a_d)=
\left\{
  \begin{array}{ll}
    f(a_0) \oplus f(a_1) \oplus \ldots \oplus f(a_d), & \hbox{if $d$ is even;} \\
    f(a_0) \oplus f(a_1) \oplus \ldots \oplus f(a_d)\oplus c, & \hbox{if $d$ is odd.}
  \end{array}
\right.$
\end{center}
This is default, so 
affine transformation definition only contains the original block  but no masked block. 

A statement is either an assignment or a function call.
\textsf{MSL} features two types of assignments which are either of the form $x \leftarrow e$ defined as usual or of the form $r \leftarrow$\textcolor{officegreen}{rand} which assigns a uniformly sampled value from the domain  $\mathbb{F}$ to the variable $r$. As a result, $r$ should be read as a random variable. We assume that each random variable is defined only once.
%
We note that the actual parameters and output are scalar if the procedure is invoked in an original block
while they are the corresponding encodings if it is invoked in a masked block.


\textsf{MSL} is the core language of our tool. In practice, to be more user-friendly, our tool also accepts C programs with
conditional branches and loops, both of which should be statically determinized (e.g., loops are bound and can be unrolled;
the branching of conditionals can also be fixed after loop unrolling). Furthermore, we assume there is no recursion and dynamic memory allocation.
These restrictions are sufficient for most symmetric cryptography and bitsliced implementations of public-key cryptography, which mostly have simple control graphs and memory aliases.


\smallskip
\noindent {\bf Problem formalization.}
Fix a program $\mathcal{P}$ with all the procedures using order-$d$ masking.
We denote by $\mathcal{P}_{\origin}$ (resp.\ $\mathcal{P}_{\masked}$) the program $\mathcal{P}$ where all the masked (resp.\ original) blocks  are omitted.
For each procedure $f$, the procedures $f_{\origin}$ and $f_{\masked}$ are defined accordingly.

\begin{definition}\label{def:problem}
Given a procedure $f$ of $\mathcal{P}$ with $m$ input parameters, $f_{\masked}$ and $f_{\origin}$ are \emph{functional equivalent}, denoted by $f_{\masked}\cong f_{\origin}$, if
the following statement holds:
\begin{center}
 \flushleft{$\forall a^1, \cdots, a^m, r_1,\cdots,r_h \in \mathbb{F}, \forall\myvec{a}^1,   \cdots, \myvec{a}^m \in \mathbb{F}^{d+1}.$}\\ \hfill
	$\big(
	\bigwedge_{i \in [1,m]}\ a^i = \bigoplus_{j \in [0,d]} {\myvec{a}}^i_j
	\big)
	\rightarrow
	\big(f_{\origin}(a^1,\cdots, a^m) = \bigoplus_{i \in [0,d]} f_{\masked}(\myvec{a}^1, \cdots, \myvec{a}^m)_i\big)$
\end{center}
\flushleft{where $r_1,\cdots,r_h$ are all the random variables used in $f_{\masked}$.}
\end{definition}
Note that although the procedure $f_{\masked}$ is randomized (i.e., the output encoding $f_{\masked}(\myvec{a}^1, \cdots, \myvec{a}^m_i)$ is technically a random variable), 
for functional equivalence we consider a stronger notion, viz., to require that $f_{\masked}$ and $f_{\origin}$ are equivalent under any values in the support of the random variables  $r_1,\cdots,r_h$. Thus, $r_1,\cdots,r_h$ are universally quantified  in Definition~\ref{def:problem}.


The verification problem is to check if $f_{\masked}\cong f_{\origin}$ for a given procedure $f$ where $\bigwedge_{i \in [1,m]}\ a^i = \bigoplus_{j \in [0,d]} {\myvec{a}}^i_j$ and $f_{\origin}(a^1,\cdots, a^m) = \bigoplus_{i \in [0,d]} f_{\masked}(\myvec{a}^1, \cdots, \myvec{a}^m)_i$
are regarded as pre- and post-conditions, respectively.
Thus, we assume  the unmasked procedures themselves
are correct (which can be verified by, e.g., CryptoLine). Our focus is on whether the masked counterparts are functional equivalent to them.

\begin{figure}[t]
\centering
  \includegraphics[width=.95\textwidth]{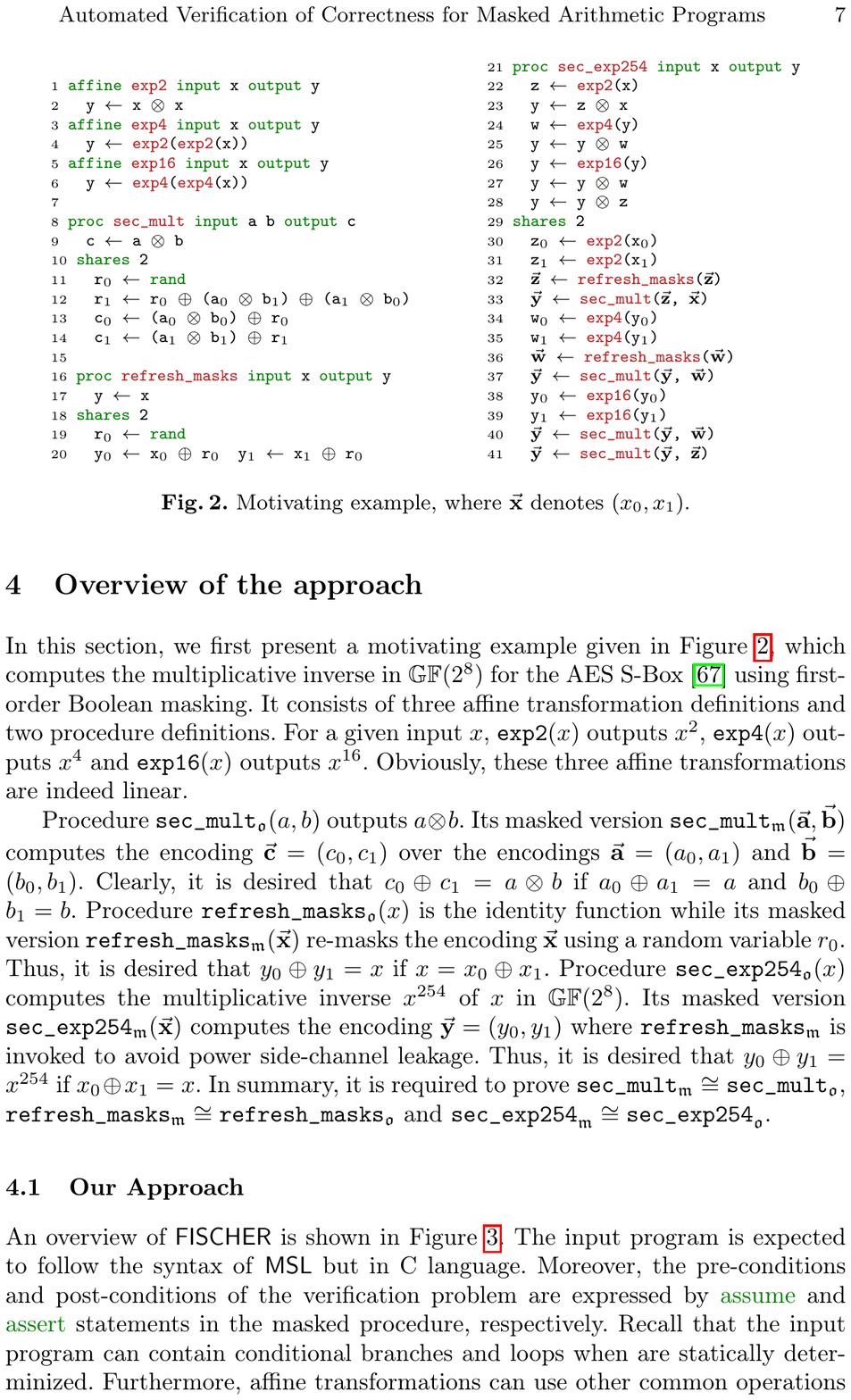}\vspace*{-2mm}
\caption{Motivating example, where $\myvec{x}$ denotes $(x_0,x_1)$.}
\label{fig:sec-exp254}\vspace*{-2mm}
\end{figure}


\section{Overview of the approach}\label{sec:MotivatingandOverview}
In this section, we first present a motivating example given in Figure~\ref{fig:sec-exp254}, which computes the multiplicative inverse in $\mathbb{GF}(2^8)$ for the AES S-Box~\cite{rivain2010provably} using
first-order Boolean masking.
It consists of three affine transformation definitions
and two procedure definitions. For a given input $x$, \texttt{exp2}$(x)$ outputs $x^2$,
\texttt{exp4}$(x)$ outputs $x^4$ and \texttt{exp16}$(x)$ outputs $x^{16}$.
Obviously, these three affine transformations are indeed linear.

Procedure \texttt{sec\_mult}$_{\origin}(a,b)$ outputs $a\otimes b$.
Its masked version  \texttt{sec\_mult}$_{\masked}(\myvec{a},\myvec{b})$ computes the encoding $\myvec{c}=(c_0,c_1)$ over
the encodings $\myvec{a}=(a_0,a_1)$ and $\myvec{b}=(b_0,b_1)$.
Clearly, it is desired that
$c_0\oplus c_1=a\otimes b $ if $a_0\oplus a_1=a$ and $b_0\oplus b_1=b$.
Procedure \texttt{refresh\_masks}$_{\origin}(x)$ is the identity function while its masked version
\texttt{refresh\_masks}$_{\masked}(\myvec{x})$ re-masks the encoding $\myvec{x}$
 using a random variable $r_0$. Thus, it is desired that $y_0\oplus y_1=x$ if $x=x_0\oplus x_1$.
Procedure \texttt{sec\_exp254}$_{\origin}(x)$ computes the multiplicative inverse $x^{254}$ of $x$ in $\mathbb{GF}(2^8)$.
Its masked version \texttt{sec\_exp254}$_{\masked}(\myvec{x})$ computes the encoding $\myvec{y}=(y_0,y_1)$
where \texttt{refresh\_masks}$_{\masked}$ is invoked to avoid power side-channel leakage.
Thus, it is desired that
$y_0\oplus y_1=x^{254}$ if $x_0\oplus x_1=x$.
In summary, it is required to prove
$\texttt{sec\_mult}_{\masked}\cong \texttt{sec\_mult}_{\origin}$, $\texttt{refresh\_masks}_{\masked}\cong \texttt{refresh\_masks}_{\origin}$
and $\texttt{sec\_exp254}_{\masked}\cong \texttt{sec\_exp254}_{\origin}$.


\subsection{Our Approach}
An overview of {\tool} is shown in Figure~\ref{fig:overview}.
The input  program is expected to follow the syntax of \textsf{MSL} but in C language. Moreover, the pre-conditions and post-conditions of the verification problem are expressed by \textcolor{officegreen}{assume} and \textcolor{officegreen}{assert} statements in the masked
procedure, respectively. Recall that the input program can contain conditional branches and loops when are statically determinized. Furthermore, affine transformations can use other common operations (e.g., shift, rotation and bitwise Boolean operations) besides the addition $\oplus$ and multiplication  $\otimes$ on the underlying field $\mathbb{F}$.
{\tool} leverages the LLVM framework
to obtain the LLVM intermediate representation (IR)
and call graph, where all the procedure calls are inlined.
It then invokes \emph{Affine Constant Computing} to iteratively compute the affine constants for affine transformations according to the call graph,
and  \emph{Functional Equivalence Checking} to check
functional equivalence, both of which
rely on the underpinning engines, viz., \emph{Symbolic Execution} (refer to symbolic computation without path constraint solving in this work), \emph{Term Rewriting} and  \emph{SMT-based Solving}.

\begin{figure}[t]
  \centering
  \includegraphics[width=.8\textwidth]{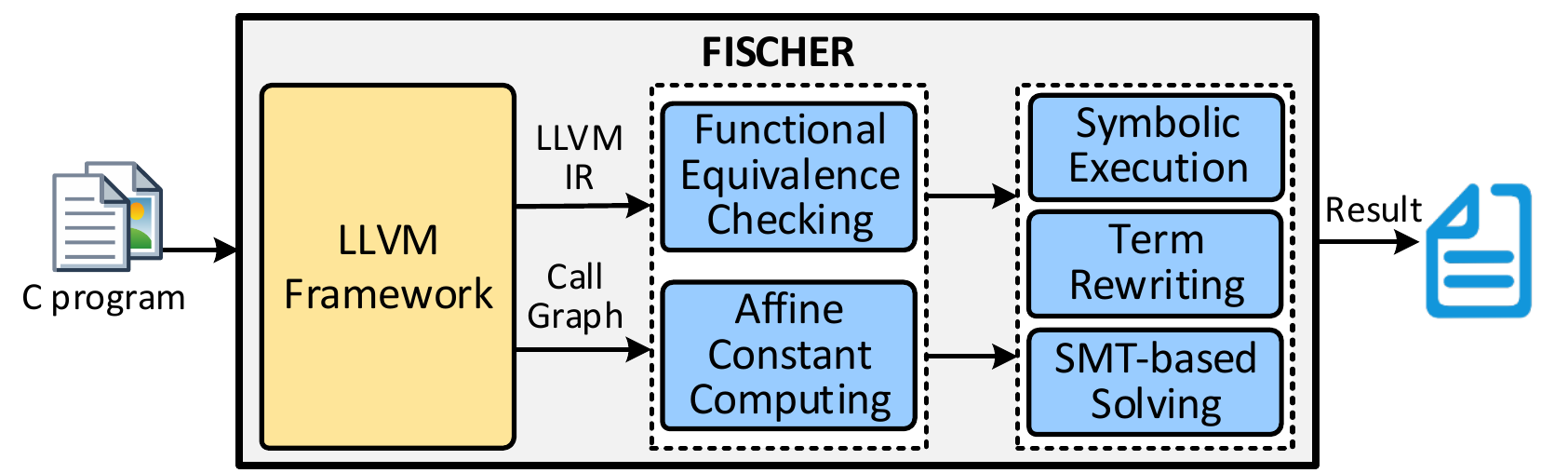}
  \caption{Overview of {\tool}.}\label{fig:overview}\vspace*{-4mm}
\end{figure}

We apply intra-procedural symbolic execution to compute the symbolic outputs of the procedures and transformations, i.e.,
 expressions in terms of inputs, random variables and affine transformations.
The symbolic outputs are treated as terms based on which both the problems of functional equivalence checking and affine constant computing
are solved by rewriting to their normal forms (i.e., sums of monomials w.r.t. a total order).
The analysis result is often conclusive from normal forms. 
In case it is inconclusive, we iteratively inline affine transformations when their definitions are available until
either the analysis result is conclusive or no more affine transformations can be inlined.
If the analysis result is still inconclusive, to reduce false positives, we apply random testing and accurate (but computationally expansive) SMT solving to the normal forms instead of the original terms.
We remark that the term rewriting system solely can prove almost all the benchmarks in our experiments.

Consider the motivating example. To find the  constant $c\in\mathbb{F}$ of {\tt exp2} 
such that the property
$\forall x, y\in \mathbb{F}. {\tt exp2}(x \oplus y) = {\tt exp2}(x) \oplus {\tt exp2}(y) \oplus c$ holds,
by applying symbolic execution, ${\tt exp2}(x)$ is expressed as the term $x\otimes x$. Thus, the property
is reformulated as
$(x \oplus y)\otimes (x \oplus y) = (x\otimes x) \oplus (y\otimes y) \oplus c$,
from which we can deduce that  the desired affine constant $c$ is equivalent to the term $((x \oplus y)\otimes (x \oplus y)) \oplus (x\otimes x) \oplus (y\otimes y)$.
Our TRS
will reduce the term as follows:\\ 
{\footnotesize\selectfont\begin{tabular}{lr}
 $\quad\underline{((x \oplus y)\otimes (x \oplus y))} \oplus (x\otimes x) \oplus (y\otimes y)$ & \quad Distributive Law\\
  $=\underline{(x\otimes (x \oplus y))} \oplus \underline{(y\otimes (x \oplus y))} \oplus (x\otimes x) \oplus (y\otimes y)$ & \quad Distributive Law\\
  $=(x\otimes x)\oplus (x\otimes y) \oplus \underline{(y\otimes x)} \oplus (y \otimes y) \oplus (x\otimes x) \oplus (y\otimes y)$ & \quad Commutative Law\\
  $=(x\otimes x)\oplus (x\otimes y) \oplus (x\otimes y) \oplus (y \otimes y) \oplus \underline{(x\otimes x)} \oplus (y\otimes y)$ &  \quad Commutative Law\\
  $=\underline{(x\otimes x)\oplus (x\otimes x)}\oplus \underline{(x\otimes y) \oplus (x\otimes y)} \oplus \underline{(y \otimes y)  \oplus (y\otimes y)}=0$ &  \quad Zero Law of XOR\\
\end{tabular}}

For the transformation {\tt exp4}$(x)$, by applying  symbolic execution, it can be expressed as the term ${\tt exp2}({\tt exp2}(x))$.
To find the constant $c\in\mathbb{F}$ to satisfy $\forall x, y\in \mathbb{F}. {\tt exp4}(x \oplus y) = {\tt exp4}(x) \oplus {\tt exp4}(y) \oplus c$,
we compute the term ${\tt exp2}({\tt exp2}(x \oplus y))\oplus {\tt exp2}({\tt exp2}(x))\oplus {\tt exp2}({\tt exp2}(y))$.
By applying our TRS, we have:

{\footnotesize\selectfont\begin{tabular}{l}
 $\quad \underline{{\tt exp2}({\tt exp2}(x \oplus y))}\oplus {\tt exp2}({\tt exp2}(x))\oplus {\tt exp2}({\tt exp2}(y))$ \\
  $= \underline{{\tt exp2}({\tt exp2}(x) \oplus {\tt exp2}(y))}\oplus {\tt exp2}({\tt exp2}(x))\oplus {\tt exp2}({\tt exp2}(y))$\\
  $= {\tt exp2}({\tt exp2}(x)) \oplus {\tt exp2}({\tt exp2}(y))\oplus \underline{{\tt exp2}({\tt exp2}(x))}\oplus {\tt exp2}({\tt exp2}(y))$\\
  $= \underline{{\tt exp2}({\tt exp2}(x)) \oplus {\tt exp2}({\tt exp2}(x))}\oplus \underline{{\tt exp2}({\tt exp2}(y))\oplus {\tt exp2}({\tt exp2}(y))}=0$\\
\end{tabular}}

\noindent Clearly, the affine constant of ${\tt exp4}$ is $0$.
Similarly, we can deduce that the affine constant of the transformation {\tt exp16} is $0$ as well.

To prove $\texttt{sec\_mult}_{\origin}\cong\texttt{sec\_mult}_{\masked}$, by applying symbolic execution, we have that
 $\texttt{sec\_mult}_{\origin}(a,b)=a\otimes b$ and $\texttt{sec\_mult}_{\masked}(\myvec{a},\myvec{b})=\myvec{c}=(c_0,c_1)$,
where $c_0= (a_0\otimes b_0)\oplus r_0$ and $c_1=(a_1\otimes b_1)\oplus (r_0\oplus (a_0\otimes b_1)\oplus (a_1\otimes b_0))$.
Then, by Definition~\ref{def:problem}, it suffices to check
\begin{center}
	\flushleft{$\forall a,b,a_0,a_1,b_0,b_1,r_0 \in \mathbb{F}.
	\big(
	a = a_0\oplus a_1 \wedge  b = b_0\oplus b_1
	\big)\rightarrow$} \\ \hfill
	$\big(a\otimes b = ((a_0\otimes b_0)\oplus r_0)\oplus \big((a_1\otimes b_1)\oplus (r_0\oplus (a_0\otimes b_1)\oplus (a_1\otimes b_0))\big) \big)$.
\end{center}

\smallskip
\noindent Thus, we check the term $\big((a_0\oplus a_1) \otimes (b_0\oplus b_1)\big)\oplus ((a_0\otimes b_0)\oplus r_0)\oplus ((a_1\otimes b_1)\oplus (r_0\oplus (a_0\otimes b_1)\oplus (a_1\otimes b_0)))$
which is equivalent to $0$ iff $\texttt{sec\_mult}_{\origin}\cong\texttt{sec\_mult}_{\masked}$.
Our TRS is able to reduce the term to $0$.
Similarly, we represent the outputs of $\texttt{sec\_exp254}_{\origin}$ and $\texttt{sec\_exp254}_{\masked}$ as terms via symbolic execution, from which
the statement $\texttt{sec\_exp254}_{\origin}\cong\texttt{sec\_exp254}_{\masked}$ is also encoded as a term, which can be reduced to $0$ via our TRS without inlining any transformations.

\section{Term Rewriting System}\label{sec:model}
In this section, we first introduce some basic notations and then present our term rewriting system.

\begin{definition}
Given a program $\mathcal{P}$ over $\mathbb{F}$, a \emph{signature} $\Sigma_\mathcal{P}$ of $\mathcal{P}$ is a set of symbols $\mathbb{F}\cup\{\oplus, \otimes, f_1, \ldots, f_t\}$,
where $s\in \mathbb{F}$ with arity $0$ are all the constants in $\mathbb{F}$,
$\oplus$ and $\otimes$ with arity $2$ are addition and multiplication operators on $\mathbb{F}$,
and $f_1,\cdots, f_t$ with arity $1$ are affine transformations defined/declared in $\mathcal{P}$.
\end{definition}

For example, the signature of the motivating example is $\mathbb{F}\cup\{\oplus, \otimes, {\tt exp2},{\tt exp4},\\ {\tt exp16}\}$.
When it is clear from the context, the subscript $\mathcal{P}$ is dropped from $\Sigma_\mathcal{P}$.

\begin{definition}
Let $V$ be a set of variables  (assuming $\Sigma\cap V=\emptyset$), the set
$T[\Sigma,V]$ of \emph{$\Sigma$-terms} over $V$ is inductively defined as follows:
\begin{itemize}
  \item $\mathbb{F}\subseteq T[\Sigma,V]$ and $V\subseteq T[\Sigma,V]$ (i.e., every variable/constant is a $\Sigma$-term);
  \item $\tau \oplus \tau'\in T[\Sigma,V]$ and $\tau \otimes \tau'\in T[\Sigma,V]$ if $\tau,\tau'\in  T[\Sigma,V]$ (i.e., application of addition and multiplication operators to $\Sigma$-terms yield $\Sigma$-terms);
  \item $f_j(\tau)\in T[\Sigma,V]$   if $\tau\in  T[\Sigma,V]$ and $j\in[1,t]$ (i.e., application of affine transformations to $\Sigma$-terms yield $\Sigma$-terms).
\end{itemize}
We denote by $T_{\backslash\oplus}(\Sigma,V)$ the set of $\Sigma$-terms that do not use the operator $\oplus$.
\end{definition}

A $\Sigma$-term $\alpha\in T[\Sigma,V]$ is called a \emph{factor} if $\tau \in \mathbb{F}\cup V$ or $\tau=f_i(\tau')$ for some $i\in[1,t]$ such that $\tau'\in T_{\backslash\oplus}(\Sigma,V)$.
A \emph{monomial} is a product $\alpha_1\otimes\cdots \otimes \alpha_k$ of none-zero factors for $k\geq 1$.
We denote by $M[\Sigma,V]$ the set of monomials.
%
%
For instance, consider variables $x,y\in V$ and affine transformations $f_1,f_2\in \Sigma$. All $f_1(f_2(x))\otimes f_1(y)$, $f_1(2\otimes f_2(4\otimes x))$, $f_1(x\oplus y)$ and $f_1(f_2(x)) \oplus f_1(x)$ are $\Sigma$-terms,
both $f_1(f_2(x))\otimes f_1(y)$ and $f_1(2\otimes f_2(4\otimes x))$ are monomials, while neither $f_1(x\oplus y)$ nor  $f_1(f_2(x)) \oplus f_1(x)$ is a monomial.
For the sake of presentation, $\Sigma$-terms will be written as terms, and the operator $\otimes$ may be omitted, e.g., $\tau_1\tau_2$ denotes $\tau_1\otimes\tau_2$,
and $\tau^2$ denotes $\tau\otimes\tau$.

\begin{definition}
A polynomial is a sum  $\bigoplus_{i\in[1,t]} m_i$ of monomials  $m_1 \ldots m_t\in M[\Sigma, V]$. We use $P[\Sigma,V]$ to denote the set of polynomials.
\end{definition}

To simplify and normalize polynomials, we impose a total order on monomials 
and their factors.

\begin{definition}
Fix an arbitrary total order $\geq_s$ on $V\uplus \Sigma$.

For two factors $\alpha$ and $\alpha'$, the \emph{factor order} $\geq_l$ is defined such that $\alpha \geq_l \alpha'$ if one of the following conditions holds:
\begin{itemize}
  \item $\alpha,\alpha'\in \mathbb{F}\cup V$ and $\alpha \geq_s \alpha'$; 
  \item $\alpha=f(\tau)$ and $\alpha'=f'(\tau')$ such that $f \geq_s f'$ or ($f = f'$ and $\tau \geq_p \tau'$);
  \item  $\alpha=f(\tau)$ such that $f \geq_s \alpha'$  or $\alpha'=f(\tau)$ such that $\alpha \geq_s f$.
\end{itemize}
Given a monomial $m=\alpha_1\cdots  \alpha_k$, we write ${\sf sort}_{\geq_l}(\alpha_1,\cdots, \alpha_k)$ for the monomial which includes $\alpha_1, \cdots, \alpha_k$ as factors, but sorts them in descending order.

Given two monomials $m=\alpha_1\cdots  \alpha_k$ and $m'=\alpha_1'\cdots  \alpha_{k'}'$, the
 \emph{monomial order} $\geq_p$ is defined as the lexicographical order between ${\sf sort}_{\geq_l}(\alpha_1,\cdots, \alpha_k)$ and ${\sf sort}_{\geq_l}(\alpha_1',\cdots, \alpha_{k'}')$.
\end{definition}

Intuitively, the factor order $\geq_l$ follows the given order $\geq_s$ on $V\uplus \Sigma$,
where the factor order between two factors with the same affine transformation $f$ is determined by their parameters.
We note that if
${\sf sort}_{\geq_l}(\alpha_1',\cdots, \alpha_{k'}')$ is a prefix of  ${\sf sort}_{\geq_l}(\alpha_1,\cdots, \alpha_k)$,
we have: $\alpha_1\cdots  \alpha_k \geq_p \alpha_1'\cdots  \alpha_{k'}'$.
Furthermore, if $\alpha_1\cdots  \alpha_k \geq_p \alpha_1'\cdots  \alpha_{k'}'$ and
$\alpha_1'\cdots  \alpha_{k'}' \geq_p \alpha_1\cdots  \alpha_k$, then
${\sf sort}_{\geq_l}(\alpha_1',\cdots, \alpha_{k'}')={\sf sort}_{\geq_l}(\alpha_1,\cdots, \alpha_k)$.
We denote by $\alpha_1\cdots  \alpha_k >_p \alpha_1'\cdots  \alpha_{k'}'$
if $\alpha_1\cdots  \alpha_k \geq_p \alpha_1'\cdots  \alpha_{k'}'$ but ${\sf sort}_{\geq_l}(\alpha_1',\cdots, \alpha_{k'}')\neq {\sf sort}_{\geq_l}(\alpha_1,\cdots, \alpha_k)$.

\begin{proposition}
The monomial order $\geq_p$ is a total order on monomials.
\end{proposition}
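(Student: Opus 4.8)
The plan is to verify that $\geq_p$ satisfies the three defining properties of a total order on the set $M[\Sigma,V]$ of monomials: totality (any two monomials are comparable), antisymmetry (if $m \geq_p m'$ and $m' \geq_p m$ then $m = m'$ as monomials, i.e.\ they are equal up to reordering of factors), and transitivity. The key observation is that $\geq_p$ is defined as the lexicographic order on the sorted factor-sequences ${\sf sort}_{\geq_l}(m)$ and ${\sf sort}_{\geq_l}(m')$, so the whole statement should reduce to standard facts about lexicographic extensions, provided two auxiliary facts are in place.

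The first auxiliary fact is that the factor order $\geq_l$ is itself a total order on the set of factors. This requires an induction on the nesting depth of affine transformations inside a factor. At depth $0$ a factor lies in $\mathbb{F}\cup V$ and $\geq_l$ is just $\geq_s$ restricted there, which is total by hypothesis. For the inductive step, a factor is either in $\mathbb{F}\cup V$ or of the form $f(\tau)$ with $\tau\in T_{\backslash\oplus}(\Sigma,V)$; comparing two such factors splits into the three bullet cases of the definition. The cross case (one factor in $\mathbb{F}\cup V$, the other $f(\tau)$) is decided by comparing that element with the head symbol $f$ via $\geq_s$, which is total. The same-head case $f(\tau)$ versus $f(\tau')$ is decided by $\tau \geq_p \tau'$, i.e.\ by the monomial order on the parameters — and here the parameters are $\oplus$-free terms, hence products of factors of strictly smaller nesting depth, so the inductive hypothesis applies; one needs the companion fact that $\geq_p$ is total (and antisymmetric/transitive) on monomials built from lower-depth factors. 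The different-head case $f \neq f'$ is decided by $f \geq_s f'$. So there is a mutual induction between ``$\geq_l$ is a total order on depth-$\leq k$ factors'' and ``$\geq_p$ is a total order on monomials over depth-$\leq k$ factors.''

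Once $\geq_l$ is known to be a total order on factors, the second part is the general principle that the lexicographic order induced by a total order on an alphabet is a total order on finite sequences over that alphabet — applied to the sequences ${\sf sort}_{\geq_l}(\alpha_1,\dots,\alpha_k)$. Totality of $\geq_p$ is then immediate. For antisymmetry one invokes the remark already recorded in the excerpt: if $m \geq_p m'$ and $m' \geq_p m$ then ${\sf sort}_{\geq_l}(\ldots)$ of the two coincide, which is exactly the statement that $m$ and $m'$ are the same monomial (monomials being identified up to factor reordering). Transitivity is the transitivity of the lexicographic order, again a routine consequence of transitivity of $\geq_l$. Reflexivity is trivial since a sequence is lexicographically $\geq$ itself.

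The main obstacle I anticipate is making the mutual recursion between $\geq_l$ and $\geq_p$ rigorous and well-founded. One must pin down the right termination measure — the nesting depth of affine-transformation applications — and check that when the definition of $\geq_l$ recurses into $\geq_p$ on the parameters $\tau,\tau'$, those parameters really are monomials (products of factors) of strictly smaller depth; this uses that parameters of a factor $f(\tau)$ lie in $T_{\backslash\oplus}(\Sigma,V)$, so $\tau$ contains no $\oplus$ and is thus a single monomial. A minor subtlety to watch is that $\geq_p$ as defined compares two possibly different-length sorted sequences lexicographically; one must fix the convention that a proper prefix is strictly smaller (consistent with the ``$>_p$'' remark in the text) so that antisymmetry genuinely forces equal-length, hence identical, sorted sequences. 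Everything else is bookkeeping about lexicographic orders.
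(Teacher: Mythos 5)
The paper states this proposition without proof, so there is nothing to compare your argument against line by line; what you have written is a correct and essentially complete way to establish it. You correctly identify the one genuinely non-obvious point, namely that $\geq_l$ and $\geq_p$ are defined by mutual recursion (comparing $f(\tau)$ with $f(\tau')$ for the same head $f$ falls back on $\geq_p$ applied to the parameters), and your proposed well-founded measure --- the nesting depth of affine-transformation applications --- works because the parameter of a factor lies in $T_{\backslash\oplus}(\Sigma,V)$ and is therefore itself a monomial whose factors have strictly smaller depth. Your treatment of antisymmetry is also the right one and matches the paper's own remark: from $m \geq_p m'$ and $m' \geq_p m$ one only gets ${\sf sort}_{\geq_l}(m) = {\sf sort}_{\geq_l}(m')$, so $\geq_p$ is a total order on monomials identified up to reordering of factors (a total preorder on raw syntax); note that this identification has to be propagated into the parameters of nested factors as well, since antisymmetry of $\geq_l$ on $f(\tau)$ versus $f(\tau')$ likewise only yields equality of $\tau$ and $\tau'$ up to commutativity. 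One small bookkeeping item you could make explicit: in the cross case where one factor is $f(\tau)$ and the other lies in $\mathbb{F}\cup V$, totality follows from totality of $\geq_s$ on $V\uplus\Sigma$, and both directions can never hold simultaneously because a transformation symbol is never equal to a constant or variable. With those details pinned down, the rest is, as you say, the standard fact that the lexicographic extension of a total order (with the proper-prefix-is-smaller convention) is a total order on finite sequences.
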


\begin{definition}
Given a program $\mathcal{P}$, we define the corresponding term rewriting system (TRS) $\mathcal{R}$ as a tuple $(\Sigma, V, \geq_s, \Delta)$, where
$\Sigma$ is a signature of $\mathcal{P}$, $V$ is a set of variables of $\mathcal{P}$ (assuming $\Sigma\cap V=\emptyset$),
$\geq_s$ is a total order on $V\uplus \Sigma$, and $\Delta$ is the set of term rewriting rules
given below:
\vspace*{-4mm}
 \begin{center}
\scalebox{0.85}{\begin{tabular}{llll}
  \multicolumn{2}{l}{$\inference[R1]{(m_1', \cdots, m_k')={\tt sort}_{\geq_p} (m_1, \cdots, m_k)\neq (m_1, \cdots, m_k)} { m_1\oplus \cdots \oplus  m_k\mapsto m_1' \oplus \cdots \oplus  m_k'}$} &
  $\inference[R3]{} {\tau\oplus \tau\mapsto 0}$  & $\inference[R5]{}  {0 \tau\mapsto 0}$ \\ \specialrule{0em}{2pt}{2pt}
  \multicolumn{2}{l}{$\inference[R2]{(\alpha_1', \cdots, \alpha_k')={\tt sort}_{\geq_l} (\alpha_1, \cdots, \alpha_k)\neq (\alpha_1, \cdots, \alpha_k)} {\alpha_1 \cdots \alpha_k\mapsto \alpha_1' \cdots \alpha_k'}$} & $\inference[R4]{} {\tau 0\mapsto 0}$  &  $\inference[R6]{}{\tau\oplus 0\mapsto \tau}$   \\ \specialrule{0em}{2pt}{2pt}
\multicolumn{2}{l}{$\inference[R7]{}{0\oplus \tau\mapsto \tau}$ \qquad $\inference[R8]{} {\tau 1 \mapsto \tau}$  \qquad $\inference[R9]{} {1  \tau \mapsto \tau}$} &   \multicolumn{2}{l}{$\inference[R10]{} {(\tau_1\oplus \tau_2) \tau\mapsto (\tau_1 \tau)\oplus (\tau_2 \tau)}$ }\\ \specialrule{0em}{2pt}{2pt}
 $\inference[R11]{} {\tau(\tau_1\oplus \tau_2) \mapsto (\tau \tau_1)\oplus (\tau \tau_2)}~~$ &
 \multicolumn{2}{l}{$\inference[R12]{} {f(\tau_1\oplus \tau_2)\mapsto f(\tau_1)\oplus f(\tau_2)\oplus c}$}  & $\inference[R13]{} { f(0)\mapsto c}$    \\ \specialrule{0em}{2pt}{2pt}
\end{tabular}}
 \end{center}
\vspace*{-1mm}
where $m_1,m_1',\cdots,m_k,m_k'\in M[\Sigma, V]$, $\alpha_1,\alpha_2,\alpha_3$ are factors, $\tau,\tau_1,\tau_2\in T[\Sigma,V]$ are terms, $f\in\Sigma$ is an affine transformation
with affine constant $c$.
 \end{definition}

Intuitively,  rules R1 and R2 specify the commutativity of $\oplus$ and $\otimes$, respectively, by which
monomials and factors are sorted according to the orders $\geq_{p}$ and $\geq_{l}$, respectively.
Rule R3 specifies that 
$\oplus$ is essentially bitwise XOR.
Rules R4 and  R5 specify that $0$ is the multiplicative zero.
Rules R6 and  R7 (resp. R8 and  R9) specify that $0$ (resp. $1$) is additive (resp. multiplicative) identity.
Rules R10 and  R11 express the distributivity of $\otimes$ over $\oplus$.
Rule R12 expresses the affine property of an affine transformation while rule R13 is an instance of rule R12 via rules R3 and R5.

Given a TRS $\mathcal{R}=(\Sigma, V, \geq_s, \Delta)$ for a given program $\mathcal{P}$, a term $\tau\in T[\Sigma,V]$ can be rewritten to a term
$\tau'$, denoted by $\tau\Rightarrow \tau'$, if there is a rewriting rule $\tau_1\mapsto \tau_2$ such that 
 $\tau'$ is a term obtained from $\tau$ by replacing an occurrence of the sub-term $\tau_1$ with the sub-term $\tau_2$.
A term is in a \emph{normal form} if no rewriting rules can be applied.
A TRS is \emph{terminating} if all terms 
can be rewritten to a normal form 
after finitely many rewriting. We denote by $\tau\Rrightarrow \tau'$ with $\tau'$ being the normal form of $\tau$.

We show that any TRS $\mathcal{R}$ associated with a program $\mathcal{P}$ is terminating, and that any term will be rewritten to a normal form that is a polynomial, independent of the way of applying rules.

\begin{lemma}\label{lemma:normalform}
For every normal form $\tau\in T[\Sigma,V]$ of the TRS $\mathcal{R}$, the term $\tau$ must be a polynomial $m_1\oplus \cdots \oplus  m_k$  such that
(1) $\forall i\in[1,k-1]$, $m_i>_p m_{i+1}$, and (2) for every monomial $m_i=\alpha_1 \cdots \alpha_h$
and $\forall i\in[1,h-1]$, $\alpha_i \geq_l \alpha_{i+1}$.
\end{lemma}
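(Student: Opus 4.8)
The plan is to characterize normal forms by showing that if a term $\tau$ fails either structural condition, then some rewriting rule applies, contradicting that $\tau$ is a normal form. I would proceed by structural induction on $\tau \in T[\Sigma,V]$, analyzing the shape of $\tau$ according to its outermost constructor.

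First I would handle the base cases: if $\tau \in \mathbb{F} \cup V$, then $\tau$ is a single factor, hence a monomial, hence a polynomial with $k=1$, and conditions (1) and (2) hold vacuously. Next, the case $\tau = f(\tau')$ for an affine transformation $f$: since $\tau$ is a normal form, $\tau'$ must itself be a normal form, so by the induction hypothesis $\tau'$ is a polynomial; but if $\tau'$ contained an $\oplus$ at the top level, rule R12 would apply, and if $\tau' = 0$ (i.e., the empty-looking polynomial / constant $0$), R13 would apply — so $\tau'$ must be a single monomial in $T_{\backslash\oplus}$, which makes $f(\tau')$ a factor, hence a monomial, hence a polynomial of length one. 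For $\tau = \tau_1 \otimes \tau_2$: by R10/R11 neither $\tau_1$ nor $\tau_2$ may have a top-level $\oplus$; by R4/R5 neither may be $0$; by R8/R9 a factor $1$ would be removed, so after the induction hypothesis both $\tau_1,\tau_2$ are products of factors, and $\tau$ is a product of factors, i.e., a monomial — and R2 being inapplicable forces the factors to be listed in $\geq_l$-descending order, giving condition (2); condition (1) is vacuous. Finally, for $\tau = m_1 \oplus \cdots \oplus m_k$ with $k \geq 2$: each $m_i$ must be a normal form, so by the previous cases each $m_i$ is a monomial with factors sorted by $\geq_l$ (condition (2)); R1 being inapplicable forces $(m_1,\dots,m_k) = {\sf sort}_{\geq_p}(m_1,\dots,m_k)$, i.e., the monomials are in $\geq_p$-descending order; R3 being inapplicable forbids two syntactically equal adjacent (hence any) monomials, which upgrades the non-strict $\geq_p$ ordering to the strict $>_p$ ordering of condition (1); and R6/R7 being inapplicable ensures no summand is $0$.

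The one genuinely delicate point — and the step I expect to be the main obstacle — is the upgrade from $\geq_p$ to the \emph{strict} $>_p$ in condition (1). Recall $\geq_p$ is a total \emph{pre}order on syntactic monomials: $m \geq_p m'$ and $m' \geq_p m$ only forces ${\sf sort}_{\geq_l}(m) = {\sf sort}_{\geq_l}(m')$, not $m = m'$ syntactically. However, since each $m_i$ is already a normal form, R2 cannot apply, so each $m_i$ \emph{equals} its own $\geq_l$-sorted form; hence two $\geq_p$-equivalent summands among the $m_i$ would be literally identical terms, and then R3 ($\tau \oplus \tau \mapsto 0$, applied to the appropriate subterm after using R1 to bring them adjacent — or, more carefully, observing that R1's sorting already places equal monomials adjacently) would apply. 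This closes the argument. I would also note in passing that the statement as phrased allows $k = 1$ (a single monomial is a normal-form polynomial) and implicitly includes the constant-$0$ case as the empty sum, and I would make sure the induction covers terms of the form $0$ and $1$ explicitly as degenerate factors/monomials so no case is missed.

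A secondary subtlety worth a sentence: the lemma presupposes the TRS is terminating (so that normal forms exist), but the lemma itself only concerns the \emph{shape} of normal forms, so no termination argument is needed here — I would simply rely on the surrounding text's claim of termination and focus purely on the case analysis above.
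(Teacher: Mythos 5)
Your proposal is correct and follows essentially the same strategy as the paper's proof: a normal form admits no applicable rule, so each structural violation (an $\oplus$ occurring under $\otimes$ or under $f$, an unsorted or duplicated monomial, an unsorted factor list) would trigger R10--R12, R1, R3, or R2 respectively. Your structural-induction organization is in fact somewhat more careful than the paper's direct two-step contradiction argument --- in particular you explicitly handle $\oplus$ nested under an affine transformation via R12 (the paper only cites R10/R11 for the ``not a polynomial'' case), and you correctly flag that $\geq_p$-equivalence of adjacent summands yields syntactic equality only because R2's inapplicability forces each monomial to equal its own $\geq_l$-sorted form, a point the paper elides by writing $m_i = m_{i+1}$ directly.
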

\begin{proof}
Consider a normal form $\tau\in T[\Sigma,V]$. If $\tau$ is not a polynomial, then there must exist some monomial $m_i$ in which
the addition operator $\oplus$ is used. This means that either rule R$_{10}$ or R$_{11}$ is applicable to the term $\tau$ which contradicts the fact that $\tau$
is normal form.

Suppose $\tau$ is the polynomial $m_1\oplus \cdots \oplus  m_k$.
\begin{itemize}
  \item If there exists $i:1\leq i< k$ such that  $m_i>_p m_{i+1}$ does not hold, then either $m_i=m_{i+1}$ or $m_{i+1}>_p m_i$.
  If $m_i= m_{i+1}$, then rule R3 is applicable to the term $\tau$. If $m_{i+1}>_p m_i$, then rule R$_{1}$ is applicable to the term $\tau$.
  Thus, for every $1\leq i< k$, $m_i>_p m_{i+1}$.
  \item If there exist a monomial $m_i=\alpha_1 \cdots \alpha_h$
and $i:1\leq i< h$ such that $\alpha_i \geq_l \alpha_{i+1}$ does not hold, then $\alpha_{i+1} >_l \alpha_i$. This means that rule R2 is applicable to the term $\tau$.
Thus,  for every monomial $m_i=\alpha_1 \cdots \alpha_h$
and every $i:1\leq i< h$, $\alpha_i \geq_l \alpha_{i+1}$.\qed
\end{itemize}
\end{proof}

\begin{lemma}\label{lemma:terminating}
The TRS $\mathcal{R}=(\Sigma, V, \geq_s, \Delta)$ of a given program $\mathcal{P}$ is terminating.
\end{lemma}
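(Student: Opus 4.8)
The plan is to exhibit a well-founded measure that strictly decreases under every rewrite rule $\tau_1 \mapsto \tau_2$ of $\Delta$, i.e., a monotone embedding of $\Rightarrow$ into a well-founded order. Because the rules are heterogeneous — some eliminate $\oplus$-subterms (R10, R11), some eliminate $\otimes$-subterms or constants (R4--R9), some reorder without changing the term set (R1, R2), one pushes an affine symbol inward while creating a fresh constant and an extra $\oplus$ (R12, R13) — no single naive size measure works, so I would use a lexicographically ordered tuple of several natural-number-valued measures on terms, ordered so that each rule decreases some component while not increasing the earlier ones.

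First I would isolate the ``structural'' rules that genuinely shrink the term. Define $N_\oplus(\tau)$ = the number of occurrences of $\oplus$ that lie \emph{inside} the scope of some $\otimes$ or some $f_j$ (equivalently, the number of $\oplus$-nodes whose path to the root passes through a $\otimes$- or $f$-node); this is exactly the quantity attacked by R10, R11, R12, R13 and it is $0$ precisely when the term is already a polynomial with polynomial arguments everywhere. R10 and R11 strictly decrease $N_\oplus$ (one interior $\oplus$ is duplicated but both copies move out from under that particular $\otimes$; a careful accounting — weighting each interior $\oplus$ by, say, $2^{(\text{number of }\otimes/f\text{ ancestors})}$ — makes the decrease unambiguous even though the raw count of $\oplus$'s may rise). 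For R12/R13 I would charge the affine pushing to a separate earlier component: let $N_f(\tau)$ = the sum over all occurrences of affine symbols $f_j$ of the ``$\oplus$-complexity of the argument'' (e.g.\ number of $\oplus$-nodes below that $f_j$); R12 strictly decreases $N_f$ (the argument $\tau_1 \oplus \tau_2$ is replaced by the smaller arguments $\tau_1, \tau_2$, and the fresh constant $c$ contributes $0$), while leaving $N_f$-relevant structure for other symbols untouched, and it does not increase whatever I put before it. The constant-and-identity rules R4--R9 strictly decrease ordinary term size $|\tau|$ (node count) and touch nothing above.

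That leaves the two sorting rules R1 and R2, which do not change $|\tau|$, $N_\oplus$, or $N_f$ at all — they only permute — so they need their own well-founded component placed \emph{last} in the lexicographic tuple. Here I would use the standard fact that bubbling toward a sorted sequence terminates: assign to a term the total number of ``inversions'' it contains, summed over every $\oplus$-list (pairs of monomials out of $\geq_p$-order, using that $\geq_p$ is a total order by the Proposition) and every $\otimes$-list (pairs of factors out of $\geq_l$-order). R1 strictly decreases the monomial-inversion count and R2 the factor-inversion count, and by the side conditions of R1/R2 the permutation is genuinely nontrivial, so the count drops by at least one; crucially, because R1 and R2 only reorder within a single list, they leave all earlier components ($|\tau|$, $N_f$, $N_\oplus$) fixed. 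Assembling the tuple in the order $\big(N_f(\tau),\, N_\oplus(\tau),\, |\tau|,\, \mathrm{Inv}(\tau)\big)$ under the lexicographic order on $\mathbb{N}^4$ — which is well-founded — I would check rule by rule that every $\tau \Rightarrow \tau'$ gives a strict decrease, and since $\Rightarrow$ is closed under taking subterm contexts (the measures are defined additively over the term tree, so a decrease at a subterm is a decrease at the root), this yields termination.

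The main obstacle I anticipate is the bookkeeping around R12 versus R10/R11: R12 both pushes $f$ inward \emph{and} introduces a new $\oplus$ at top level together with the constant $c$, and one must be sure that this new $\oplus$ (and the new $f$-applications $f(\tau_1)$, $f(\tau_2)$) do not inflate the component that a later rule is responsible for — in particular that the two fresh copies of $f$ applied to $\tau_1$ and $\tau_2$ strictly lower $N_f$ rather than merely relocate its mass. Getting the weighting of $N_f$ and $N_\oplus$ right (by depth below $f$/$\otimes$ nodes, so that duplication across a distributive step is paid for by the decreased nesting depth) is the delicate point; once that weighting is pinned down, the remaining verifications for R1--R9, R13 are routine. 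An alternative, if the hand-rolled measure proves fiddly, is to appeal to a recursive path ordering (RPO/LPO): orient $\Delta$ by giving $f_j \succ \otimes \succ \oplus \succ$ (constants) in the precedence and checking each rule is RPO-oriented — R12 then reduces because $f(\tau_1\oplus\tau_2) \succ f(\tau_i)$ by the subterm property and $\succ \oplus(\dots)$ by precedence, R10/R11 because $\otimes \succ \oplus$, R1/R2/R3/R6--R9 because the right side is a subterm (or a permutation handled by the multiset status of $\oplus$ and $\otimes$) — which packages the well-foundedness argument into a known theorem and sidesteps the custom measure entirely.
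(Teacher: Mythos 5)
Your overall strategy --- split the rules into $\oplus$-pushing rules (R10--R13), size-reducing rules (R3--R9), and sorting rules (R1--R2), and kill each class with its own well-founded component --- is exactly the decomposition the paper's proof uses (the paper argues it informally: R10--R12 ``push $\oplus$ toward the root'', R3--R9 and R13 shrink the term, R1--R2 are terminating sorts). The gap is in the concrete lexicographic tuple $\bigl(N_f, N_\oplus, |\tau|, \mathrm{Inv}\bigr)$: your leading component $N_f$ is \emph{not} non-increasing under R10/R11. Take $f\bigl((x\oplus y)\otimes(u\oplus v)\bigr)$: the single $f$-occurrence has two $\oplus$-nodes below it, so $N_f=2$; one application of R10 inside the argument yields $f\bigl((x\otimes(u\oplus v))\oplus(y\otimes(u\oplus v))\bigr)$ with $N_f=3$, so the tuple strictly \emph{increases}. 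The same duplication defeats the unweighted $N_\oplus$, and the $2^{\text{depth}}$ weighting you propose only yields $2^{k}\mapsto 2^{k-1}+2^{k-1}$, i.e.\ equality rather than strict decrease. The root cause is that $N_f$ and $N_\oplus$ are not monotone under context the way you claim (``additively over the term tree''): a rewrite deep inside an argument changes the count attributed to every $f$- or $\otimes$-ancestor, and duplication under $\otimes$ cannot be paid for by any additive depth weighting.

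The repair is to make the interpretation of $\otimes$ multiplicative rather than additive, i.e.\ a standard polynomial interpretation: set $[\![v]\!]=2$ for every variable and constant, $[\![\sigma\oplus\sigma']\!]=[\![\sigma]\!]+[\![\sigma']\!]+1$, $[\![\sigma\otimes\sigma']\!]=[\![\sigma]\!]\cdot[\![\sigma']\!]$, and $[\![f(\sigma)]\!]=[\![\sigma]\!]^2$. All interpretations are strictly monotone and take values $\geq 2$, so decreases propagate through contexts; one checks R3--R13 each strictly decrease $[\![\cdot]\!]$ (e.g.\ R10: $([\![\tau_1]\!]+[\![\tau_2]\!]+1)[\![\tau]\!] > [\![\tau_1]\!][\![\tau]\!]+[\![\tau_2]\!][\![\tau]\!]+1$ since $[\![\tau]\!]\geq 2$; R12: $([\![\tau_1]\!]+[\![\tau_2]\!]+1)^2 > [\![\tau_1]\!]^2+[\![\tau_2]\!]^2+[\![c]\!]+2$), while R1/R2 preserve it by commutativity of $+$ and $\cdot$; the pair $\bigl([\![\cdot]\!],\mathrm{Inv}\bigr)$ then finishes the argument. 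Your RPO fallback is also essentially sound for R3--R13 (with precedence $f\succ\otimes\succ\oplus\succ$ constants), but note that multiset status makes the two sides of R1/R2 \emph{equivalent}, not decreasing, so even on that route you still need the inversion count together with the non-identity side conditions of R1/R2 as a final tiebreaker --- a point worth stating explicitly, since the paper handles it only by remarking that ``sorting terminates''.
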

\begin{proof}
Consider a term $\tau\in T[\Sigma,V]$. Let $\pi=\tau_1\Rightarrow \tau_2\Rightarrow\tau_3\Rightarrow\cdots \Rightarrow\tau_i\Rightarrow\cdots$ be a reduction
of the term $\tau$ by applying rewriting rules, i.e., $\tau=\tau_1$. We prove that the reduction $\pi$ is finite by showing that all the rewriting rules can be applied
finitely.

First, since rules R1 and R2  only sort the monomials and factors, respectively, while sorting always terminates using any classic sorting algorithm (e.g., quick sort algorithm),
rules R1 and R2 can only be consecutively applied finitely for each term $\tau_i$ due to the premises ${\tt sort}_{\geq_p} (m_1, \cdots, m_k)\neq (m_1, \cdots, m_k)$
and ${\tt sort}_{\geq_l} (\alpha_1, \cdots, \alpha_k)\neq (\alpha_1, \cdots, \alpha_k)$ in rules R1 and R2, respectively.

Second, rules R10,  R11 and  R12 can only be applied finitely in the reduction $\pi$, as these rules always push the addition operator $\oplus$ toward the root
of the syntax tree of the term $\tau_i$ when one of them is applied onto a term $\tau_i$, while the other rules
either eliminate or reorder the addition operator $\oplus$. 


Lastly, rules R3--9 and  R13 can only be applied finitely in the reduction $\pi$, as
these rules reduce the size of the term by $1$ when one of them is applied onto a term $\tau_i$ while
the rules R10--12 that  increase the size of the term can only be applied finitely. 

Hence, the reduction $\pi$ is finite indicating that the TRS $\mathcal{R}$ is terminating.
\qed
\end{proof}

By Lemmas~\ref{lemma:normalform} and~\ref{lemma:terminating}, any term $\tau\in T[\Sigma,V]$ can be rewritten to a normal form that must be a polynomial.

\begin{theorem}\label{thm:terminating}
Let $\mathcal{R}=(\Sigma, V, \geq_s, \Delta)$ be the TRS of a program $\mathcal{P}$. For any term $\tau\in T[\Sigma,V]$,
a polynomial $\tau'\in T[\Sigma,V]$ can be computed such that  $\tau \Rrightarrow\tau'$.
\end{theorem}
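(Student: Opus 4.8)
The plan is to combine Lemma~\ref{lemma:terminating} and Lemma~\ref{lemma:normalform} directly. First I would invoke Lemma~\ref{lemma:terminating} to conclude that, starting from $\tau$, every reduction sequence $\tau = \tau_1 \Rightarrow \tau_2 \Rightarrow \cdots$ is finite; hence by applying rules until no rule is applicable we obtain some term $\tau'$ with $\tau \Rrightarrow \tau'$ that is in normal form. Then I would invoke Lemma~\ref{lemma:normalform}, which says exactly that any normal form is a polynomial (indeed a sorted, reduced polynomial $m_1 \oplus \cdots \oplus m_k$). So $\tau' \in P[\Sigma,V] \subseteq T[\Sigma,V]$ is the desired polynomial, and the algorithm that computes it is simply ``apply any applicable rewrite rule repeatedly until stuck''. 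This establishes the existence claim and the ``can be computed'' claim simultaneously.

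The one subtlety worth addressing explicitly is whether the theorem intends $\tau'$ to be \emph{unique} (i.e.\ confluence / the Church--Rosser property), since the phrase ``the normal form'' and the notation $\Rrightarrow$ used just before the theorem suggest a well-defined function $\tau \mapsto \tau'$. The statement as written only asserts that \emph{a} polynomial can be computed, so strictly I only need termination plus the shape lemma. But if uniqueness is wanted, the plan is to argue it via the sorted-polynomial characterization of Lemma~\ref{lemma:normalform}: any normal form of $\tau$ has the canonical shape $\bigoplus m_i$ with monomials strictly decreasing under $>_p$ and factors within each monomial sorted under $\geq_l$, and two such canonical forms that are both reachable from $\tau$ must represent the same element of the free commutative-idempotent-monoid-with-distributing-affine-maps structure, hence must be syntactically identical. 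Concretely one would check that each rewrite rule preserves the ``semantic value'' of a term in the appropriate quotient algebra (R1, R2 are reorderings; R3 is idempotence of $\oplus$; R4--R9 are identity/absorption laws; R10, R11 are distributivity; R12, R13 are the defining affine identity), so all normal forms of $\tau$ have equal value, and two distinct sorted reduced polynomials have distinct values.

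I expect the main obstacle — assuming we do go for uniqueness — to be making the ``distinct sorted reduced polynomials have distinct values'' step rigorous, because it requires a clean description of the intended semantic domain (polynomials over $\mathbb{F}$ in the variables $V$ modulo the affine-transformation relations, with the characteristic-2 coefficient collapse $\tau \oplus \tau = 0$ built in), together with the observation that the canonical form is a genuine normal-form representative of each equivalence class. This is essentially a Newman's-lemma-style argument (termination is already in hand, so local confluence suffices) or, more directly, a ``unique representative'' argument; either way the bookkeeping about how affine transformations interact with the factor order $\geq_l$ is the delicate part. If the theorem is read in its weaker existential form, however, there is no real obstacle: it is an immediate corollary of the two preceding lemmas, and the proof is a two-line composition.
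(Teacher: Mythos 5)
Your proof matches the paper's: the theorem is stated as an immediate corollary of Lemma~\ref{lemma:terminating} (termination) and Lemma~\ref{lemma:normalform} (every normal form is a sorted polynomial), read in exactly the weaker existential form you identify. Your aside on uniqueness is also consistent with the paper, which explicitly conjectures confluence in a remark following the theorem but declines to prove it as irrelevant to the verification problem.
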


\begin{remark}
Besides the termination of a TRS, confluence is another important property of a TRS, where a TRS is confluent if any given term $\tau \in T[\Sigma,V]$ can be rewritten to two distinct terms $\tau_1$ and $\tau_2$,
then the terms $\tau_1$ and $\tau_2$ can be reduced to a common term.
While we conjecture that the TRS $\mathcal{R}$ associated with the given program is indeed confluent which may be shown by its local confluence~\cite{newman1942theories},
we do not strive to prove its confluence, as it is irrelevant to the problem considered in the current work.
\end{remark}

\section{Algorithmic Verification}\label{sec:algs}
In this section, we first present an algorithm for computing normal forms,  
then show how to compute the affine constant for an affine transformation, and finally
propose an algorithm for solving the verification problem. 

\subsection{Term Normalization Algorithm}

We provide the function \TermNorm (cf.\ Alg.~\ref{alg:termnorm}) which applies the rewriting rules in a particular order 
aiming for better efficiency.
Fix a TRS $\mathcal{R}=(\Sigma, V, \geq_s, \Delta)$, a term $\tau\in T[\Sigma,V]$
and a mapping $\lambda$ that provides required affine constants $\lambda(f)$.
 \TermNorm$(\mathcal{R},\tau,\lambda)$ returns a normal form $\tau'$ of $\tau$,
i.e., $\tau\Rrightarrow \tau'$. 

\begin{algorithm}[t]
\caption{Term Normalization}\label{alg:termnorm}
\SetKwProg{Fn}{Function}{:}{}
\Fn{\TermNorm{$\mathcal{R}$, $\tau$, $\lambda$}}{
    Rewrite $\tau$ by iteratively applying rules R$_3$--R$_{13}$ until no more update;\; \label{alg:termnorm-1}
    $\tau'\leftarrow {\tt sort}(\tau)$ by iteratively applying rule R$_{2};$\; \label{alg:termnorm-2}
    $\tau'\leftarrow {\tt sort}(\tau')$ by iteratively applying rule R$_{1}$;\; \label{alg:termnorm-3}
    Rewrite $\tau'$ by iteratively applying rules R$_3$, R$_{6}$, R$_{7}$ until no more update;\;  \label{alg:termnorm-4}
    \Return{$\tau'$}\;
}
\end{algorithm}

\TermNorm first  applies rules R3--R13 to rewrite the term $\tau$ (line~\ref{alg:termnorm-1}),
resulting in a polynomial which does not have $0$ as a factor or monomial (due to rules R4--R7),
or $1$ as a factor in a monomial  unless the monomial itself is $1$ (due to  rules R$_8$ and R$_9$).
Next, it recursively sorts all the factors and monomial involved in the polynomial
from the innermost sub-terms (lines~\ref{alg:termnorm-2} and  \ref{alg:termnorm-3}).
Sorting factors and monomials 
will place the same monomials at adjacent positions.
Finally, rules R3 and R6--R7 are further 
applied to simplify the polynomial (line~\ref{alg:termnorm-4}),
where consecutive syntactically equivalent monomials 
will be rewritten to $0$ by rule R3,
which may further enable rules R6--R7.
Obviously, the final term $\tau'$ is a normal form of the input $\tau$, although its size may be exponential in that of $\tau$. 

\begin{lemma}\label{lemma:alg1}
 \TermNorm$(\mathcal{R},\tau,\lambda)$  returns a normal form $\tau'$ of $\tau$.\qed
\end{lemma}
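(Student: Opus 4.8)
\textbf{Proof proposal for Lemma~\ref{lemma:alg1}.}

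The plan is to argue that \TermNorm\ is just a particular scheduling of the rewriting rules of $\mathcal{R}$, and that its output is genuinely irreducible. There are two things to check: (i) every step of the algorithm is a (finite) sequence of legitimate $\Rightarrow$-steps, so that $\tau \Rrightarrow \tau'$ for \emph{some} normal form — this already follows from Theorem~\ref{thm:terminating} and Lemma~\ref{lemma:terminating} once we know each phase terminates; and (ii) the returned term $\tau'$ is in normal form, i.e.\ no rule of $\Delta$ applies to it. Part (ii) is the substantive content.

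First I would note that each of the four phases terminates. Phase~1 (line~\ref{alg:termnorm-1}) uses only R3--R13; among these, R10--R12 strictly push $\oplus$ toward the root while R3--R9 and R13 strictly shrink the term, so by the argument already given in the proof of Lemma~\ref{lemma:terminating} this phase is finite. Phases~2 and~3 are plain sorting (of factors within each monomial, then of monomials within the polynomial) and terminate by the premises of R2 and R1 respectively, exactly as in Lemma~\ref{lemma:terminating}. Phase~4 uses only R3, R6, R7, each of which strictly decreases term size, hence is finite. So $\tau'$ is well-defined and, being obtained from $\tau$ by finitely many $\Rightarrow$-steps, satisfies $\tau \Rightarrow^* \tau'$.

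Next I would verify that $\tau'$ is irreducible, by checking the rule groups against the structural invariants established by Lemma~\ref{lemma:normalform}. After Phase~1 the term is a polynomial $\bigoplus_i m_i$ of monomials of nonzero factors containing no $0$ (R4--R7) and no redundant $1$ (R8, R9), and crucially no further application of R10, R11, R12, R13, R4, R5, R8, R9 is possible — these rules only remove $0$/$1$ or distribute $\oplus$, none of which can be recreated by the sorting in Phases~2--3 or by the $\oplus$-simplifications in Phase~4 (sorting permutes factors/monomials; R3/R6/R7 only delete monomials or the constant $0$, and deleting a monomial cannot introduce a $0$-factor, a stray $1$, an inner $\oplus$, or an $f(\tau_1\oplus\tau_2)$ redex). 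Phases~2 and~3 leave the polynomial with factors sorted by $\geq_l$ inside each monomial and monomials sorted by $\geq_p$, so R2 and R1 no longer apply. Finally Phase~4 reruns R3, R6, R7 to completion, so no R3 redex (two syntactically equal adjacent monomials — and by the sortedness from Phase~3, any two equal monomials are adjacent) and no R6/R7 redex ($\tau\oplus 0$ or $0\oplus\tau$) remains; one must check that the re-sorting is not disturbed, but R3 deletes a matched pair of equal monomials and R6/R7 delete a $0$ summand, both of which preserve the $\geq_p$-descending order on the surviving monomials and the $\geq_l$-sortedness within each. Hence after Phase~4 none of R1--R13 applies, so $\tau'$ is a normal form, and combined with $\tau\Rightarrow^*\tau'$ we get $\tau\Rrightarrow\tau'$.

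The main obstacle is the bookkeeping in the last paragraph: making precise that the simplifications of Phase~4 cannot reintroduce any redex killed in Phases~1--3, and in particular that deleting monomials (R3) or $0$-summands (R6/R7) preserves both the $\geq_l$-sortedness inside each monomial and the $\geq_p$-sortedness of the polynomial, so that R1 and R2 stay inapplicable. This is a routine but slightly fiddly case analysis on which rule removes what; once it is laid out, irreducibility of $\tau'$ — and hence the lemma — follows.
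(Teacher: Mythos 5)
Your proof is correct and takes essentially the same route as the paper, which offers no explicit argument beyond the algorithm description and the remark that ``obviously, the final term $\tau'$ is a normal form'': each phase terminates by the reasoning of Lemma~\ref{lemma:terminating}, and irreducibility holds because sorting makes equal monomials adjacent and the final pass of R3/R6/R7 eliminates them without disturbing the $\geq_l$- and $\geq_p$-sortedness. Your version is in fact more careful than the paper's, since it spells out the bookkeeping that the paper leaves implicit.
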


\subsection{Computing Affine Constants} \label{sect:const}
The function {\AffConst} 
in Alg.\ref{alg:affconst} computes
the associated affine constant for an affine transformation $f$.
It first 
sorts all affine transformations 
in a topological order based on the call graph $G$ (lines~\ref{alg:affconst-forbegin}--\ref{alg:affconst-forend}).
If  $f$ is \emph{only} declared in $\mathcal{P}$,
as mentioned previously, we assumed it is linear, thus $0$ is assigned to $\lambda(f)$ (line~\ref{alg:affconst-2}).
Otherwise, it extracts the input $x$ of $f$  
and computes its output $\xi(x)$ via symbolic execution (line~\ref{alg:affconst-4}),
where $\xi(x)$ is treated as $f(x)$.
We remark that during symbolic execution, we adopt a lazy strategy for inlining invoked affine transformations in $f$ to reduce
the size of $\xi(x)$.
Thus, $\xi(x)$ may contain affine transformations.

\begin{algorithm}[t]
\caption{Computing Affine Constants}\label{alg:affconst}
\SetKwProg{Fn}{Function}{:}{}
\Fn{\AffConst{$\mathcal{P}, \mathcal{R}, G$}}{
    \ForEach{affine transformation $f$ in a topological order of call graph $G$}{ \label{alg:affconst-forbegin}
        \If{$f$ is \emph{only} declared in $\mathcal{P}$}{$\lambda(f)\leftarrow0$;}  \label{alg:affconst-2}
        \Else{
            $x\leftarrow$input of $f$;\;  \label{alg:affconst-3}
            $\xi(x)\leftarrow {\tt symbolicExecution}(f)$;\;    \label{alg:affconst-4}
            $\tau\leftarrow \xi(x)[x\mapsto x\oplus y]\oplus \xi(x) \oplus \xi(x)[x\mapsto y]$;\;  \label{alg:affconst-5}
            \While{{\tt True}}{     \label{alg:affconst-whilebegin}
                $\tau\leftarrow \TermNorm(\mathcal{R},\tau,\lambda)$;\;   \label{alg:affconst-6}
                \If{$\tau$ is some constant $c$}{$\lambda(f)\leftarrow c$; \Break;}   \label{alg:affconst-7}
                \ElseIf{$g$ is defined in $\mathcal{P}$ but has not been inlined in $\tau$}{   \label{alg:affconst-8}
                    Inline $g$ in $\tau$; \Continue;\;   \label{alg:affconst-9}
                }
                \ElseIf{$\tau$ does not contain any uninterpreted function}{
                    $v_1,u_1,v_2,u_2\leftarrow$random values from $\mathbb{F}$ s.t. $v_1\neq v_2\vee u_1\neq u_2$;\;  \label{alg:affconst-91}
                    \If{$\tau[x\mapsto v_1,y\mapsto u_1]\neq \tau[x\mapsto v_2,y\mapsto u_2]$}{{\tt Emit}($f$ is not affine) and {\tt Abort};} \label{alg:affconst-92}
                }
                \If{{\tt SMTSolver}$(\forall x.\forall y. \tau=c)$={\tt SAT}}{   \label{alg:affconst-10}
                    $\lambda(f)\leftarrow$extract $c$ from the model; \Break;   \label{alg:affconst-11}
                }
                \lElse{{\tt Emit}($f$ may not be affine) and {\tt Abort};}   \label{alg:affconst-12}
            }    \label{alg:affconst-whileend}
        }
    }   \label{alg:affconst-forend}
    \Return{$\lambda$;}\;
}
\end{algorithm} 

Recall that $c$ is the affine constant of $f$  iff $\forall x, y\in \mathbb{F}. f(x \oplus y) = f(x) \oplus f(y) \oplus c$ holds.
Thus, we create the term $\tau=\xi(x)[x\mapsto x\oplus y]\oplus \xi(x) \oplus \xi(x)[x\mapsto y]$ (line~\ref{alg:affconst-4}), where
$e[a\mapsto b]$ denotes the substitution of $a$ with $b$ in $e$.
Obviously, the term $\tau$ is equivalent to some constant $c$ iff $c$ is the affine constant of $f$.

The while-loop (lines~\ref{alg:affconst-whilebegin}--\ref{alg:affconst-whileend}) evaluates $\tau$.
First, it rewrites $\tau$ to a normal form (line~\ref{alg:affconst-6}) by invoking \TermNorm in Alg.\ref{alg:termnorm}. If the normal form
is some constant $c$, then $c$ is the affine constant of $f$.
Otherwise, \AffConst repeatedly inlines each affine transformation $g$ that is defined in $P$ but has not been inlined in $\tau$ (lines~\ref{alg:affconst-8} and \ref{alg:affconst-9})
and rewrites the term $\tau$ to a normal form until either the normal form is some constant $c$ or no affine transformation can be inlined.
If the normal form is still not a constant, 
$\tau$ is evaluated using random input values. Clearly, if $\tau$ is evaluated to two distinct values (line~\ref{alg:affconst-92}), $f$ is not  affine.
Otherwise, we check the satisfiability of
the constraint $\forall x,y. \tau=c$ via an SMT solver in bitvector theory (line~\ref{alg:affconst-10}), where declared but undefined affine transformations are treated
as uninterpreted functions provided with their affine properties.
If $\forall x, y. \tau=c$ is satisfiable, we extract the affine constant $c$
from its model (line~\ref{alg:affconst-11}).
Otherwise, we emit an error and then abort  (line~\ref{alg:affconst-12}), indicating that the affine constant of $f$ cannot be computed.
Since the satisfiability problem module bitvector theory is decidable, we can conclude that $f$ is \emph{not} affine if  $\forall x.\forall y. \tau=c$ is unsatisfiable
and no uninterpreted function is involved in $\tau$.

\begin{lemma}\label{lemma:alg2}
Assume an affine transformation $f$ in $\mathcal{P}$.
If \AffConst$(\mathcal{P},\mathcal{R},G)$ in Alg.~\ref{alg:affconst} returns a mapping $\lambda$, then
$\lambda(f)$ is the affine constant of $f$.\qed 
\end{lemma}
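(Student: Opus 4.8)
\textbf{Proof proposal for Lemma~\ref{lemma:alg2}.}

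The plan is to show that whenever \AffConst returns a mapping $\lambda$, the value $\lambda(f)$ produced for each affine transformation $f$ is the (unique) constant $c\in\mathbb{F}$ witnessing $\forall x,y\in\mathbb{F}.\ f(x\oplus y)=f(x)\oplus f(y)\oplus c$. Because the outer loop processes transformations in a topological order of the call graph $G$, I would argue by induction on this order. The induction hypothesis is that for every transformation $g$ processed strictly before $f$, the entry $\lambda(g)$ already equals the affine constant of $g$; I need this so that every application of rule R$_{12}$/R$_{13}$ inside \TermNorm (which consults $\lambda$) is semantically sound, and so that every inlining step at line~\ref{alg:affconst-9} is sound as well.

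The core of the argument is a semantic soundness claim for the term $\tau$ maintained in the while-loop: at every iteration, $\tau$ denotes, under any assignment of field values to $x,y$ and any interpretation of the remaining uninterpreted (declared-only, hence linear) transformations consistent with their affine properties, the same field element as $\xi(x)[x\mapsto x\oplus y]\oplus\xi(x)\oplus\xi(x)[x\mapsto y]$, which by construction equals $f(x\oplus y)\oplus f(x)\oplus f(y)$. This is an invariant preserved by all three operations that modify $\tau$: (i) \TermNorm preserves the denotation by Lemma~\ref{lemma:alg1} together with the fact that each rewriting rule R$_1$--R$_{13}$ is a valid identity over $\mathbb{F}$ (here R$_{12}$/R$_{13}$ use the induction hypothesis, and R$_3$ uses characteristic~$2$); (ii) inlining a defined transformation $g$ at line~\ref{alg:affconst-9} replaces $g$ by a term symbolically computing the same function, so the denotation is unchanged; (iii) the loop never otherwise mutates $\tau$. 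Given this invariant, I then discharge the three exit points. If line~\ref{alg:affconst-7} fires, $\tau$ has been rewritten to a constant $c$, so $f(x\oplus y)\oplus f(x)\oplus f(y)=c$ identically, i.e.\ $c$ is the affine constant; uniqueness in $\mathbb{F}$ makes ``the'' well-defined. If the procedure reaches line~\ref{alg:affconst-10} and the SMT query $\forall x.\forall y.\ \tau=c$ returns \textsf{SAT}, the extracted $c$ again satisfies the defining identity (soundness of the SMT solver over the bitvector theory, with declared transformations as uninterpreted functions constrained by their linearity). The remaining branches (lines~\ref{alg:affconst-92}, \ref{alg:affconst-12}) do not return $\lambda$ — they \textsf{Abort} — so they are vacuous for the statement; I only need to note that if \AffConst \emph{returns}, none of them was taken for any $f$, hence every $\lambda(f)$ was set at line~\ref{alg:affconst-7} or line~\ref{alg:affconst-11} and is therefore correct. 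Finally, the branch at line~\ref{alg:affconst-2} is immediate: a declared-only $f$ is stipulated to be linear, so its affine constant is $0=\lambda(f)$.

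The main obstacle I anticipate is making the ``denotation invariant'' precise in the presence of uninterpreted functions: one must fix the right notion of model — an interpretation of $\mathbb{F}$ as the concrete field together with an arbitrary interpretation of each declared-only symbol as a linear map — and check that every rewriting rule, the symbolic-execution inlining, and the SMT encoding are all sound with respect to exactly this class of models. In particular the soundness of R$_{12}$ for a \emph{defined} transformation $g$ hinges on the induction hypothesis that $\lambda(g)$ is already correct, so the induction on the topological order must be threaded carefully through the invariant; and one must observe that \TermNorm terminates (Theorem~\ref{thm:terminating}) and that the while-loop itself terminates because each iteration either exits or strictly decreases the number of not-yet-inlined defined transformations, so that ``returns a mapping'' is not vacuous. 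The rest is routine: each individual rule identity over $\mathbb{GF}(2^n)$ is elementary, and uniqueness of the affine constant follows since $f(0\oplus 0)=f(0)\oplus f(0)\oplus c$ forces $c=f(0)$.
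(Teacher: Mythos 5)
The paper states this lemma with no proof at all (the \qed is attached directly to the statement), so there is no official argument to compare against; your proposal supplies the proof the authors omit, and it is the natural one. The two load-bearing ideas --- induction along the topological order of $G$ so that $\lambda(g)$ is already correct for every callee $g$ before R$_{12}$/R$_{13}$ are applied to $g$-terms, and a denotation-preservation invariant for $\tau$ under both \TermNorm and inlining, with each rule checked as an identity over $\mathbb{GF}(2^n)$ --- are exactly what the algorithm's design presupposes, and your handling of the exit points, the abort branches, and uniqueness of $c$ via $c=f(0)$ is correct. One spot remains thinner than the rest, and you partly anticipate it yourself: in the branch at line~\ref{alg:affconst-10}, satisfiability of $\forall x.\forall y.\ \tau=c$ in the presence of declared-only transformations treated as uninterpreted functions only yields \emph{some} interpretation of those symbols (and of $c$) making the identity valid, whereas the lemma, read for a program containing declared-only transformations, needs the extracted $c$ to be the affine constant of $f$ under the \emph{actual} (unknown) linear interpretations --- i.e.\ the quantifier order is $\exists c\,\forall\text{interp}$, not $\forall\text{interp}\,\exists c$ as the solver certifies. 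This is a weakness of the algorithm's SMT fallback rather than of your argument, and it does not affect the branch where $\tau$ rewrites to a constant (there your invariant gives validity over all admissible models), but a fully rigorous proof should either restrict the lemma's claim in the SMT branch to the model found, or observe that when $\tau$ contains no uninterpreted functions the issue disappears. With that caveat made explicit, your proof is complete.
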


\subsection{Verification Algorithm}

The verification problem is solved by the function \Verifier$(\mathcal{P})$ in Alg.~\ref{alg:correctness},
which checks if $f_{\masked}\cong f_{\origin}$,
for each procedure $f$ defined in $\mathcal{P}$.
 It first preprocesses the given program $\mathcal{P}$ by inlining all the procedures, unrolling all the loops and
eliminating all the branches (line~\ref{alg:correctness-1}).
Then, it computes the corresponding TRS $\mathcal{R}$, call graph $G$ and affine constants as the mapping $\lambda$, respectively (line~\ref{alg:correctness-2}).
Next, it iteratively checks if $f_{\masked}\cong f_{\origin}$,
for each procedure $f$ defined in $\mathcal{P}$ (lines~\ref{alg:correctness-forbegin}--\ref{alg:correctness-forend}).

For each procedure $f$, it first extracts the inputs $a^1, \cdots, a^m$ of $f_{\origin}$ that are scalar variables (line~\ref{alg:correctness-3})
and input encodings $\myvec{a}^1, \cdots, \myvec{a}^m$ of  $f_{\masked}$ that are 
vectors of variables (line~\ref{alg:correctness-4}).
Then, it computes the output $\xi(a^1, \cdots, a^m)$ of $f_{\origin}$ via symbolic execution, which yields 
an expression in terms of $a^1, \cdots, a^m$ and affine transformations (line~\ref{alg:correctness-5}).
Similarly, it computes the output $\myvec{\xi'}(\myvec{a}^1,   \cdots, \myvec{a}^m)$ of $f_{\masked}$ via symbolic execution, i.e.,  a tuple
of expressions in terms of the entries of the input encodings $\myvec{a}^1, \cdots, \myvec{a}^m$, random variables and affine transformations (line~\ref{alg:correctness-6}).

Recall that $f_{\masked}\cong f_{\origin}$ iff for all $a^1, \cdots, a^m, r_1,\cdots,r_h \in \mathbb{F}$ and for all $\myvec{a}^1,   \cdots, \myvec{a}^m \in \mathbb{F}^{d+1}$, the following constraint holds (cf. Definition~\ref{def:problem}):
\begin{center}
$\big(
	\bigwedge_{i \in [1,m]}\ a^i = \bigoplus_{j \in [0,d]} {\myvec{a}}^i_j
	\big)
	\rightarrow
	\big(f_{\origin}(a^1,\cdots, a^m) = \bigoplus_{i \in [0,d]} f_{\masked}(\myvec{a}^1, \cdots, \myvec{a}^m_i)\big)$
\end{center}
where $r_1,\cdots,r_h$ are all the random variables used in $f_{\masked}$.
Thus, it creates the term $\tau=\xi(a^1, \cdots, a^m)[a^1\mapsto \bigoplus\myvec{a}^1,\cdots, a^m\mapsto \bigoplus\myvec{a}^m]\oplus \bigoplus \myvec{\xi'}(\myvec{a}^1,   \cdots, \myvec{a}^m)$ (line~\ref{alg:correctness-7}), where $a^i\mapsto \bigoplus\myvec{a}^i$ is the substitution of $a^i$ with the term $\bigoplus\myvec{a}^i$ in the expression $\xi(a^1, \cdots, a^m)$.
Obviously, $\tau$ is equivalent to $0$ iff $f_{\masked}\cong f_{\origin}$.
 
\begin{algorithm}[t]
\caption{Verification Algorithm}\label{alg:correctness}
\SetKwProg{Fn}{Function}{:}{}
\Fn{\Verifier{$\mathcal{P}$}}{
   Inline all the procedures, unroll loops and eliminate branches in  $\mathcal{P}$;\;  \label{alg:correctness-1}
   $\mathcal{R}\leftarrow {\tt buildTRS}(\mathcal{P})$; 
    $G\leftarrow {\tt buildCallGraph}(\mathcal{P})$; 
    $\lambda\leftarrow \AffConst(\mathcal{P}, \mathcal{R}, G)$;\; \label{alg:correctness-2}
    \ForEach{procedure $f$ defined in $\mathcal{P}$}{ \label{alg:correctness-forbegin}
        Let $a^1, \cdots, a^m$ be the inputs of $f_{\origin}$;\;   \label{alg:correctness-3}
        Let $\myvec{a}^1,   \cdots, \myvec{a}^m$ be the input encodings of $f_{\masked}$;\;   \label{alg:correctness-4}
        $\xi(a^1, \cdots, a^m)\leftarrow {\tt symbolicExecution}(f_{\origin})$;\;    \label{alg:correctness-5}
        $\myvec{\xi'}(\myvec{a}^1,   \cdots, \myvec{a}^m)\leftarrow {\tt symbolicExecution}(f_{\masked})$;\;    \label{alg:correctness-6}
        $\tau\leftarrow \xi(a^1, \cdots, a^m)[a^1\mapsto \bigoplus\myvec{a}^1,\cdots, a^m\mapsto \bigoplus\myvec{a}^m]\oplus \bigoplus \myvec{\xi'}(\myvec{a}^1,   \cdots, \myvec{a}^m)$;\;    \label{alg:correctness-7}
        \While{{\tt True}}{     \label{alg:correctness-whilebegin}
            $\tau\leftarrow \TermNorm(\mathcal{R},\tau,\lambda)$\;     \label{alg:correctness-8}
            \If{$\tau$ is some constant $c$}{
                \lIf{$c=0$}{{\tt Emit}($f$ is correct);  \Break;}     \label{alg:correctness-9}
                \lElse{{\tt Emit}($f$ is incorrect);  \Break;}       \label{alg:correctness-10}
            }
            \ElseIf{$g$ is defined in $\mathcal{P}$ but has not been inlined in $\tau$}{    \label{alg:correctness-11}
                Inline $g$ in $\tau$; \Continue;\;     \label{alg:correctness-12}
            }
            \ElseIf{$\tau$ does not contain any uninterpreted function}{
                $\myvec{v}^1,   \cdots, \myvec{v}^m\leftarrow$random values from $\mathbb{F}^{d+1}$;\; \label{alg:correctness-120}
                \If{$\tau[\myvec{a}^1\mapsto \myvec{v}^1,\cdots, \myvec{a}^m\mapsto \myvec{v}^m]\neq 0$}{{\tt Emit}($f$ is incorrect); \Break;} \label{alg:correctness-121}
            }
            \If{{\tt SMTSolver}$(\tau\neq 0)$={\tt UNSAT}}{   \label{alg:correctness-13}  {{\tt Emit}($f$ is correct);} \Break;   \label{alg:correctness-14}
            }
            \lElse{{\tt Emit}($f$ may be incorrect); \Break;}    \label{alg:correctness-15}
        }    \label{alg:correctness-whileend}

    }  \label{alg:correctness-forend}
}
\end{algorithm} 

To check if  $\tau$ is equivalent to $0$, similar to computing affine constants
in Alg.~\ref{alg:affconst}, the algorithm repeatedly rewrites the term $\tau$ to a normal form by invoking  \TermNorm in Alg.~\ref{alg:termnorm}
until either the 
conclusion is drawn or no affine transformation can be inlined (lines~\ref{alg:correctness-whilebegin}--\ref{alg:correctness-whileend}).
We declare that $f$ is correct if the normal form is $0$ (line~\ref{alg:correctness-9})
and incorrect if it is a non-zero constant (line~\ref{alg:correctness-10}).
If the normal form is \emph{not} a constant,
we repeatedly inline  affine transformation $g$ defined in $P$ which has not been inlined in $\tau$
and re-check the term $\tau$. 

If 
there is no definite answer after inlining all the affine transformations,
$\tau$ is evaluated using random input values.
$f$ is \emph{incorrect} if $\tau$ is non-zero (line~\ref{alg:correctness-121}).
Otherwise, we check the satisfiability of
the constraint $\tau\neq 0$ via an SMT solver in bitvector theory (line~\ref{alg:correctness-13}).
If  $\tau\neq 0$ is unsatisfiable, then $f$ is \emph{correct}.
Otherwise 
we can conclude that $f$ is \emph{incorrect} if  
no uninterpreted function is involved in $\tau$, but in other cases it is not conclusive.

\begin{theorem}\label{thm:verif}
Assume a procedure $f$ in $\mathcal{P}$.
If \Verifier$(\mathcal{P})$ emits ``\emph{$f$ is correct}'', then $f_{\masked}\cong f_{\origin}$;
  if \Verifier$(\mathcal{P})$ emits ``\emph{$f$ is incorrect}'' or ``\emph{$f$ may be incorrect}'' with no uninterpreted function
involved in its final term $\tau$, then $f_{\masked}\not\cong f_{\origin}$. \qed
\end{theorem}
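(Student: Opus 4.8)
The plan is to establish the two implications separately, each by tracking the invariant that the term $\tau$ maintained by \Verifier\ is, at every iteration of the while-loop, semantically equivalent (as a function of the free variables over $\mathbb{F}$, for all values of the random variables) to the ``correctness residual'' $\xi(a^1,\dots,a^m)[a^i\mapsto\bigoplus\myvec{a}^i]\oplus\bigoplus\myvec{\xi'}(\myvec{a}^1,\dots,\myvec{a}^m)$, which by construction (line~\ref{alg:correctness-7} and Definition~\ref{def:problem}) equals $0$ identically iff $f_{\masked}\cong f_{\origin}$. First I would argue that symbolic execution is faithful: $\xi$ and $\myvec{\xi'}$ compute, as $\Sigma$-terms, exactly the input-output functions of $f_{\origin}$ and $f_{\masked}$ (this uses the absence of recursion, the static determinacy of control flow after preprocessing on line~\ref{alg:correctness-1}, and the single-assignment discipline for random variables). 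Next I would show that none of the operations performed on $\tau$ changes its denotation: \TermNorm\ only applies the rewrite rules of $\mathcal{R}$, and each rule R1--R13 is a valid identity over $\mathbb{F}$ once the affine constant supplied by $\lambda$ is correct — which it is, by Lemma~\ref{lemma:alg2}; and inlining a defined affine transformation $g$ (line~\ref{alg:correctness-12}) replaces $g(\cdot)$ by its symbolic body, again denotation-preserving. Hence the invariant holds throughout.

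With the invariant in hand, the first implication is immediate: if \Verifier\ emits ``$f$ is correct'', it did so either at line~\ref{alg:correctness-9}, where the current normal form is the constant $0$, or at line~\ref{alg:correctness-14}, where the SMT solver reported $\tau\neq 0$ unsatisfiable in the bitvector theory (with declared-only transformations as uninterpreted functions carrying their affine axioms). In the first case the invariant gives that the correctness residual is identically $0$; in the second case, unsatisfiability of $\tau\neq 0$ over all instantiations — including all instantiations of the uninterpreted symbols, in particular the intended ones — again forces the residual to vanish identically. Either way $f_{\masked}\cong f_{\origin}$. One subtlety to spell out here is the correspondence between the bitvector encoding used by the SMT solver and the field $\mathbb{F}=\mathbb{GF}(2^n)$: XOR is bitwise, and $\otimes$ must be axiomatized faithfully (or, since $\tau$ is already a polynomial normal form, expressed directly), so that ``unsatisfiable'' really means ``identically zero over $\mathbb{F}$''.

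For the second implication, suppose \Verifier\ emits ``$f$ is incorrect'' or ``$f$ may be incorrect'' and no uninterpreted function occurs in the final $\tau$. The emission happens at one of lines~\ref{alg:correctness-10}, \ref{alg:correctness-121}, or \ref{alg:correctness-15}. In the first case $\tau$ is a non-zero constant $c$, so by the invariant the residual equals $c\neq 0$ identically, witnessing inequivalence at every input. In the second case $\tau$ evaluated at a concrete random point $\myvec{v}^1,\dots,\myvec{v}^m$ is non-zero, so again the residual is non-zero at a concrete input. In the third case the SMT solver found $\tau\neq 0$ satisfiable; since $\tau$ contains no uninterpreted function, the bitvector theory here is a complete decision procedure, so a genuine model exists, i.e. a concrete assignment to the $\myvec{a}^i$ (and to the $r_j$, which are universally quantified in Definition~\ref{def:problem}) on which the residual is non-zero. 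In all three cases the defining condition of $f_{\masked}\cong f_{\origin}$ fails, so $f_{\masked}\not\cong f_{\origin}$.

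The main obstacle I anticipate is the faithfulness layer rather than the logical bookkeeping: precisely stating and justifying that intra-procedural symbolic execution (with lazy inlining of affine transformations, so that $\xi$ and $\myvec{\xi'}$ may still contain transformation symbols) produces $\Sigma$-terms whose denotations are exactly the procedure semantics, and that the $\Sigma$-term semantics is compatible with the SMT solver's bitvector semantics for both $\oplus$ and $\otimes$ over $\mathbb{GF}(2^n)$. Once these semantic-adequacy facts are pinned down, the soundness/completeness argument is just the loop-invariant chase above together with the decidability of bitvector satisfiability (used for completeness only in the uninterpreted-function-free case, exactly as the statement requires).
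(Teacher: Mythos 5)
Your proposal is correct and follows essentially the same route as the paper, which states this theorem without a separate proof and relies on exactly the argument you spell out: the residual term built at line~\ref{alg:correctness-7} vanishes identically iff $f_{\masked}\cong f_{\origin}$, every step of the loop (rewriting via \TermNorm{} with the affine constants certified by Lemma~\ref{lemma:alg2}, and inlining of defined transformations) preserves its denotation, and the emission points are then handled by the case analysis you give, with decidability of bitvector satisfiability supplying completeness precisely when no uninterpreted function remains. Your added care about the faithfulness of symbolic execution and the bitvector encoding of $\oplus$ and $\otimes$ over $\mathbb{GF}(2^n)$ is a reasonable elaboration of what the paper leaves implicit, not a divergence from it.
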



\subsection{Implementation Remarks}
To implement the algorithms, we use the total order $\geq_s$ on $V\uplus \Sigma$ 
where all the constants are smaller than the variables, which are in turn smaller than the affine transformations. The order of constants is the standard one on integers, and the order of variables (affine transformations)
uses lexicographic order.

In terms of data structure, each term is primarily stored by a directed acyclic graph, 
allowing us to represent and rewrite common sub-terms in an optimised way.
Once a (sub-)term becomes a polynomial during term rewriting, it is stored
as a sorted nested list w.r.t. the monomial order $\geq_p$, where
each monomial is also stored as a sorted list w.r.t. the factor order $\geq_l$. Moreover,
the factor of the form $\alpha^k$ in a monomial is stored by a pair $(\alpha,k)$.

We also adopted two strategies: 
(i) By Fermat's little theorem~\cite{vinogradov2016elements}, 
$x^{2^n - 1} = 1$ for any $x\in \mathbb{GF}(2^n)$. Hence  each $k$ in $(\alpha,k)$ can be simplified to $k \mod (2^n-1)$.
(ii) By rule R12, a term $f(\tau_1\oplus \cdots \oplus  \tau_k)$ can be directly rewritten to
$f(\tau_1)\oplus \cdots \oplus  (\tau_k)$ if $k$ is odd, and $f(\tau_1)\oplus \cdots \oplus  f(\tau_k)\oplus c$ if $k$ is even,
where $c$ is the affine constant associated with the affine transformation $f$.

\section{Evaluation}\label{sec:exp} 
We implement our approach as a tool {\tool} for verifying masked programs in LLVM IR,
based on the LLVM framework. We first evaluate {\tool} for computing
affine constants (i.e., Alg.~\ref{alg:affconst}), correctness verification, and scalability w.r.t.\ the masking order (i.e., Alg.~\ref{alg:correctness})
on benchmarks using the ISW scheme. To show the generality of our approach,
{\tool} is then used to verify benchmarks using glitch-resistant Boolean masking schemes and lattice-based public-key cryptography.
All experiments are conducted on a machine with Linux kernel 5.10, Intel i7 10700 CPU (4.8 GHz, 8 cores, 16 threads) and 40 GB memory.
Milliseconds (ms) and seconds (s) are used as the time units in our experiments.
 
\subsection{Evaluation for Computing Affine Constants} 
To evaluate Alg.~\ref{alg:affconst},
we compare with a pure SMT-based approach which directly checks 
$\exists c.\forall x, y\in \mathbb{F}. f(x \oplus y) = f(x) \oplus f(y) \oplus c$ using Z3~\cite{moura2008z3}, CVC5~\cite{BarbosaBBKLMMMN22} and Boolector~\cite{brummayer2009boolector},
by implementing $\oplus$ and $\otimes$ in bit-vector theory, where $\otimes$ is achieved via the Russian peasant method~\cite{bowden1912russian}.
Technically, SMT solvers only deal with satisfiability, but they usually can eliminate the universal quantifiers in this case, as $x,y$ are over a finite field.
In particular, in our experiment,
Z3 is configured with default (i.e. \texttt{(check-sat)}), simplify (i.e. \texttt{(check-sat-using (then simplify smt))}) and bit-blast (i.e. \texttt{(check-sat-using (then bit-blast smt))}), denoted
by Z3-d, Z3-s and Z3-b, respectively.
%
We focus on the following functions: $\texttt{exp}i(x)=x^i$ for $i\in\{2,4,8,16\}$;
$\texttt{rotl}i(x)$ for $i\in\{1,2,3,4\}$ that left rotates $x$ by $i$ bits;
$\texttt{af}(x) = \texttt{rotl1}(x) \oplus \texttt{rotl2}(x) \oplus \texttt{rotl3}(x) \oplus \texttt{rotl4}(x) \oplus 99$ used in AES S-Box;
$\texttt{L1}(x)=7x^2\oplus14x^4\oplus7x^8$, $\texttt{L3}(x)=7x\oplus12x^2\oplus12x^4\oplus9x^8$, $\texttt{L5}(x)=10x\oplus9x^2$ and $\texttt{L7}(x)=4x\oplus13x^2\oplus13x^4\oplus14x^8$ used in PRESENT S-Box over $\mathbb{GF}(16)=\mathbb{GF}(2)[X]/(X^4+X+1)$~\cite{bogdanov2007present,carlet2012higher};
$\texttt{f1}(x)=x^3$, $\texttt{f2}(x)=x^2\oplus x\oplus 1$, $\texttt{f3}(x)=x\oplus x^5$ and
$\texttt{f4}(x)=\texttt{af}(\texttt{exp2}(x))$ over $\mathbb{GF}(2^8)$.%

The results are reported in Table~\ref{tab:exp-affine}, where the 2nd--8th rows show the execution time
 and the last row shows the affine constants if they exist  otherwise \ding{55}.
\begin{table}[t]
	\centering
	\caption{Results of computing affine constants, where $\dag$ means Alg.~\ref{alg:affconst} needs SMT solving, $\ddag$ means affineness is disproved via testing,
\ding{55} means nonaffineness, and Alg.~\ref{alg:affconst}+B means Alg.~\ref{alg:affconst}+Boolector.}\vspace{-1mm}
\setlength{\tabcolsep}{1.2pt}
	\label{tab:exp-affine}
    \scalebox{0.66}{
	\begin{tabular}{|c||c|c|c|c|c|c|c|c|c|c|c|c|c|c|c|c|c|}
        \hline
        Tool & \texttt{exp2} & \texttt{exp4} & \texttt{exp8} & \texttt{exp16}
        & \texttt{rotl1} & \texttt{rotl2} & \texttt{rotl3} & \texttt{rotl4}
        & \texttt{af} & \texttt{L1} & \texttt{L3} & \texttt{L5} & \texttt{L7}
        & \texttt{f1} & \texttt{f2} & \texttt{f3} & \texttt{f4}\\
        \hline \hline
        Alg.~\ref{alg:affconst}+Z3-d & 3ms & 3ms & 3ms & 3ms & 18ms$^{\dag}$ & 18ms$^{\dag}$ & 18ms$^{\dag}$ & 18ms$^{\dag}$ & 21ms$^{\dag}$
        & 3ms & 3ms & 3ms & 3ms & 3ms$^{\ddag}$ & 3ms & 3ms$^{\ddag}$ & 21ms$^{\dag}$\\
         Alg.~\ref{alg:affconst}+Z3-b & 3ms & 3ms & 3ms & 3ms & 15ms$^{\dag}$ & 16ms$^{\dag}$ & 15ms$^{\dag}$ & 15ms$^{\dag}$ & 20ms$^{\dag}$
        & 3ms & 3ms & 3ms & 3ms & 3ms$^{\ddag}$ & 3ms & 3ms$^{\ddag}$ & 20ms$^{\dag}$\\
         Alg.~\ref{alg:affconst}+B & 3ms & 3ms & 3ms & 3ms & 8ms$^{\dag}$ & 8ms$^{\dag}$ & 8ms$^{\dag}$ & 8ms$^{\dag}$ & 13ms$^{\dag}$
        & 3ms & 3ms & 3ms & 3ms & 3ms$^{\ddag}$ & 3ms & 3ms$^{\ddag}$ & 14ms$^{\dag}$\\ \hline
         Z3-d & 181ms & 333ms & 316ms & 521ms & 14ms & 14ms & 14ms & 14ms & 16ms
        & 113ms & 213ms & 73ms & 194ms & 33ms & 249ms & 38ms & 7.5s\\
        Z3-s & 180ms & 373ms & 452ms & 528ms & 12ms & 12ms & 12ms & 12ms & 15ms
        & 158ms & 202ms & 194ms & 213ms & 28ms & 252ms & 35ms & 7.6s\\
         Z3-b & 15ms & 16ms & 18ms & 20ms & 12ms & 12ms & 12ms & 12ms & 79ms
        & 45ms & 42ms & 21ms & 82ms & 17ms & 22ms & 24ms & 60ms\\ \hline
         Boolector & 15ms & 18ms & 12ms & 17ms & 5ms & 5ms & 6ms & 5ms & 71ms & 25ms & 34ms & 27ms & 78ms & 14ms & 15ms & 17ms & 67ms\\ \hline
         CVC5 & 8.4s & 20.3s & 44.4s & 18.6s & 5ms & 5ms & 5ms & 5ms & 113ms
        & 158.4s & 263.4s & 43.7s & 214.9s & 92ms & 10.3s & 2.3s & 10.4s\\ \hline
        \hline
     Result & 0 & 0 & 0 & 0 & 0 & 0 & 0 & 0 & 99
        & 0 & 0 & 0 & 0 & \ding{55} & 1 &  \ding{55}  & 99\\
        \hline
	\end{tabular}
    }
\end{table}
We observe
that Alg.~\ref{alg:affconst} significantly outperforms the SMT-based approach on most cases for all the SMT solvers,
except for \texttt{rotl$i$} and \texttt{af} (It is not surprising, as
they use operations rather than $\oplus$ and $\otimes$, thus SMT solving is required).
The term rewriting system is often able to
compute affine constants \emph{solely} (e.g., $\texttt{exp}i$ and  $\texttt{L}i$), 
and SMT solving is required \emph{only} for computing the affine constants of $\texttt{rotl}i$. By comparing the results of Alg.~\ref{alg:affconst}+Z3-b vs. Z3-b and
Alg.~\ref{alg:affconst}+B vs. Boolector on
\texttt{af}, we observe that term rewriting is essential as checking normal form---instead of the original constraint---reduces the cost of SMT solving.

\subsection{Evaluation for Correctness Verification}\label{sec:comparsion}

To evaluate Alg.~\ref{alg:correctness}, we compare it with a pure SMT-based approach with SMT solvers Z3, CVC5 and Boolector.
We also consider several promising general-purpose software verifiers
SMACK (with Boogie and Corral engines), SeaHorn, CPAChecker
and Symbiotic, and one cryptography-specific verifier CryptoLine (with SMT and CAS solvers), where
the verification problem is expressed using \textcolor{officegreen}{assume} and \textcolor{officegreen}{assert}  statements.
Those verifiers are configured in two ways: (1) recommended ones in the manual/paper or used in the competition, and (2) by trials of different configurations and selecting the optimal one.
Specifically:
\begin{itemize}
  \item CryptoLine (commit \texttt{7e237a9}). Both solvers SMT and CAS are used;
  \item SMACK v2.8.0. integer-encoding: bit-vector, verifier: corral/boogie (both used), solver: Z3/CVC4 (Z3 used), static-unroll: on, unroll: 99;
  \item SEAHORN v0.1.0 RC3 (commit \texttt{e712712}). pipeline: bpf, arch: m64, inline: on, track: mem, bmc: none/mono/path (mono used), crab: on/off (off used);
\item CPAChecker v2.1.1. default.properties with cbmc: on/off (on used);
\item Symbiotic v8.0.0. officially-provided SV-COMP configuration with exit-on-error: on.
\end{itemize}

The benchmark comprises five different masked programs \texttt{sec\_mult}  for finite-field multiplication  over $\mathbb{GF}(2^8)$
by varying masking order $d=0,1,2,3$, where
the $d=0$ means the program is unmasked. We note that \texttt{sec\_mult} in \cite{barthe2017parallel} is only available for masking order $d\ge 2$.

The results are shown in Table~\ref{tab:exp-secmult}.
We can observe that {\tool} is significantly more efficient than the others,
and is able to prove all the cases  using our term rewriting system \emph{solely} (i.e., without random testing or SMT solving).
With the increase of masking order $d$, almost all the other tools failed. 
Both CryptoLine (with the CAS solver) and CPAChecker fail to verify any of the cases due to the non-linear operations involved in \texttt{sec\_mult}.
SMACK with Corral engine produces two false positives (marked by $\natural$ in Table~\ref{tab:exp-secmult}).
These results suggest that dedicated verification approaches are required for proving the correctness
of masked programs.
\begin{table}[t]
	\centering
	\caption{Results on various \texttt{sec\_mult},
        where T.O. means time out (20 minutes),
        N/A means that UNKNOWN result,
        and $\natural$ means that verification result is \emph{incorrect}.}\vspace{-1mm}
	\label{tab:exp-secmult}\setlength{\tabcolsep}{1pt}
    \scalebox{0.715}{
	\begin{tabular}{|c|c|c|c|c|c|c|c|c|c|c|c|c|c|c|c|}
        \hline
       Order  & \multirow{2}{*}{Ref.} & \multirow{2}{*}{Alg.~\ref{alg:correctness}} & \multicolumn{3}{c|}{Z3} & \multirow{2}{*}{Boolector}
       & \multirow{2}{*}{CVC5} & \multicolumn{2}{c|}{CryptoLine} & \multicolumn{2}{c|}{SMACK} & \multirow{2}{*}{SeaHorn} & \multirow{2}{*}{CPAChecker} & \multirow{2}{*}{Symbiotic}\\
        \cline{4-6}\cline{9-12}
         \multicolumn{1}{|c|}{$d$} & \multicolumn{1}{c|}{} & & default & simplify & bit-blast & & & SMT & CAS & Boogie & Corral & & &\\
        \hline
        \multirow{4}{*}{0} & \cite{rivain2010provably} & 17ms & 29ms & 27ms & 42ms &25ms & 29ms & 39ms & N/A & 29s & 66s & 132ms & T.O. & 870s\\
        & \cite{belaid2016randomness} & 20ms & 31ms & 31ms & 45ms & 28ms & 33ms & 35ms & N/A & 46s & 144s & 128ms & T.O. & 899s\\
        & \cite{gross2017reconciling} & 21ms & 33ms & 31ms & 46ms & 29ms & 33ms & 32ms & N/A & 23s & 43s & 127ms & T.O. & 872s\\
        & \cite{cassiers2020trivially} & 18ms & 30ms & 28ms & 25ms & 26ms & 31ms & 32ms & N/A & 17s & 56s & 130ms & T.O. & 876s\\
        \hline
        \multirow{4}{*}{1} & \cite{rivain2010provably} & 18ms & 298ms & 299ms & 391s & 3.8s & T.O. & 469ms & N/A & T.O. & T.O. & 13s & T.O. & T.O.\\
        & \cite{belaid2016randomness} & 20ms & 299ms & 299ms & 1049s & 1.9s & T.O. & 582ms & N/A & T.O. & T.O. & 13s & T.O. & T.O.\\
        & \cite{gross2017reconciling} & 24ms & 295ms & 295ms & 1199s & 1.8s & T.O. & 951ms & N/A & T.O. & T.O. & 14s & T.O. & T.O.\\
        & \cite{cassiers2020trivially} & 20ms & 1180s & 921s & T.O. & 7.7s & T.O. & 21s & N/A & T.O. & T.O. & T.O. & T.O. & T.O.\\
        \hline
        \multirow{4}{*}{2} & \cite{rivain2010provably} & 20ms & 4.1s & 4.2s & T.O. & T.O. & T.O. & T.O. & N/A & T.O. & T.O. & T.O. & T.O. & T.O.\\
        & \cite{belaid2016randomness} & 22ms & 4.2s & 4.4s & T.O. & T.O. & T.O. & T.O. & N/A & T.O. & T.O. & T.O. & T.O. & T.O.\\
        & \cite{barthe2017parallel} & 30ms & 4.2s & 4.1s & T.O. & T.O. & T.O. & T.O. & N/A & T.O. & 26s$^\natural$ & T.O. & T.O. & T.O.\\
        & \cite{gross2017reconciling} & 29ms & 4.2s & 4.2s & T.O. & T.O. & T.O. & T.O. & N/A & T.O. & T.O. & T.O. & T.O. & T.O.\\
        & \cite{cassiers2020trivially} & 22ms & T.O. & T.O. & T.O. & T.O. & T.O. & T.O. & N/A & T.O. & T.O. & T.O. & T.O. & T.O.\\
        \hline
        \multirow{4}{*}{3} & \cite{rivain2010provably} & 21ms & T.O. & T.O. & T.O. & T.O. & T.O. & T.O. & N/A & T.O. & T.O. & T.O. & T.O. & T.O.\\
        & \cite{belaid2016randomness} & 26ms & T.O. & T.O. & T.O. & T.O. & T.O. & T.O. & N/A & T.O. & T.O. & T.O. & T.O. & T.O.\\
        & \cite{barthe2017parallel} & 27ms & T.O. & T.O. & T.O. & T.O. & T.O. & T.O. & N/A & T.O. & 1059s$^\natural$ & T.O. & T.O. & T.O.\\
        & \cite{gross2017reconciling} & 29ms & T.O. & T.O. & T.O. & T.O. & T.O. & T.O. & N/A & T.O. & T.O. & T.O. & T.O. & T.O.\\
        & \cite{cassiers2020trivially} & 24ms & T.O. & T.O. & T.O. & T.O. & T.O. & T.O. & N/A & T.O. & T.O. & T.O. & T.O. & T.O.\\
        \hline
	\end{tabular}
    }
\end{table}

\subsection{Scalability of {\tool}}\label{sec:scale}
To evaluate the scalability of {\tool}, we verify different versions of 
\texttt{sec\_mult}  
and masked procedures \texttt{sec\_aes\_sbox} (resp.\ \texttt{sec\_present\_sbox}) of S-Boxes
used in AES~\cite{rivain2010provably} (resp.\ PRESENT~\cite{carlet2012higher})
with varying masking order $d$.
Since it is known that  \texttt{refresh\_masks} in~\cite{rivain2010provably} is vulnerable when $d\geq 4$~\cite{CPRR13},
a fixed version \texttt{RefreshM}~\cite{BBDFGSZ16}
is used in all the S-Boxes (except
that when \texttt{sec\_mult} is taken from \cite{barthe2017parallel} its own version  is used).
We note that \texttt{sec\_present\_sbox} uses the affine transformations \texttt{L1}, \texttt{L3}, \texttt{L5}, \texttt{L7}, \texttt{exp2} and \texttt{exp4},
while \texttt{sec\_aes\_sbox}  uses the affine transformations \texttt{af}, \texttt{exp2}, \texttt{exp4} and \texttt{exp16}.

The results are reported in Table \ref{tab:exp-scale}.
All those benchmarks are proved using our term
rewriting system solely except for the three incorrect ones marked by $\natural$.
{\tool} scales up to masking order of 100 or even 200 for
\texttt{sec\_mult}, which is remarkable. 
{\tool} also scales up to masking order of 30 or even 40 for \texttt{sec\_present\_sbox}.
However, it is less scalable on  \texttt{sec\_aes\_sbox}, as
it computes the multiplicative inverse $x^{254}$ on shares, and the size of the term encoding the equivalence problem explodes
with the increase of the masking order.
Furthermore, to better demonstrate the effectiveness of our term writing system in dealing with complicated procedures, we first use Algorithm~\ref{alg:affconst} 
to derive affine constants on \texttt{sec\_aes\_sbox} with ISW~\cite{rivain2010provably} and then directly apply SMT solvers to solve the correctness constraints obtained at Line~\ref{alg:correctness-7} of Algorithm~\ref{alg:correctness}. It takes about 1 second to obtain the result on the first-order masking, while 
fails to obtain the result within 20 minutes on the second-order masking.

\begin{table}[t]
	\centering
	\caption{Results on \texttt{sec\_mult} and S-Boxes, where T.O. means time out (20 minutes),
    and $\natural$ means that the program is \emph{incorrect}.	\label{tab:exp-scale}}\vspace{-1mm}\setlength{\tabcolsep}{1pt}
    \scalebox{0.75}{
	\begin{tabular}{|c|c|c|c|c|c|c|c|c|c|c|c|c|c|c|c|c|c|}
        \hline
       \multirow{2}{*}{\diagbox{Ref.}{$d$}}     & \multicolumn{6}{c|}{\texttt{sec\_mult}} & \multicolumn{7}{c|}{\texttt{sec\_present\_sbox}} & \multicolumn{4}{c|}{\texttt{sec\_aes\_sbox}}\\
        \cline{2-18}
         & 5 & 10 & 20 & 50 & 100 & 200 
        & 1 & 2 & 5 & 10 & 20 & 30 & 40
        & 1 & 2 & 4 & 5 \\ 
        \hline
       ISW \cite{rivain2010provably} & 23ms & 33ms & 84ms & 1.0s & 15s & 545s
        & 44ms & 51ms & 93ms & 535ms & 14s & 118s & T.O.
        & 87ms & 234ms & 25s & 160s \\ 
       ISW \cite{belaid2016randomness} & 26ms & 44ms & 100ms & 712ms & 7.3s & 212s
        & 54ms & 63ms & 110ms & 673ms & 17s & 163s & T.O.
        & 108ms & 265ms & 23s & 142s \\ 
      ISW  \cite{barthe2017parallel} & 36ms$^\natural$ & 49ms & 109ms & 601ms & 3.2s & 18s
        & -- & 86ms & 142ms$^\natural$ & 237ms & 841ms & 2.4s & 5.3s
        & -- & 559ms & 9.7s & 142s$^\natural$ \\ 
       ISW \cite{gross2017reconciling} & 34ms & 50ms & 98ms & 518ms & 3.1s & 19s
        & 67ms & 91ms & 137ms & 700ms & 20s & 173s & T.O.
        & 140ms & 571ms & 63s & T.O. \\ 
       ISW \cite{cassiers2020trivially} & 30ms & 109ms & 224ms & 5.0s & 152s & T.O.
        & 51ms & 61ms & 113ms & 354ms & 2.4s & 9.7s & 29s
        & 133ms & 269ms & 13s & 68s \\ 
        \hline
	\end{tabular}
    }
\end{table}

A highlight of our findings is that {\tool} reports that \texttt{sec\_mult} from~\cite{barthe2017parallel} and the S-boxes based on this version are incorrect when $d=5$.
After a careful analysis, we found that indeed
it is incorrect for any $d \equiv 1 \mod 4$ (i.e., 5, 9, 13, etc).
This is because \cite{barthe2017parallel} parallelizes the multiplication over the entire encodings (i.e., tuples of shares) while the parallelized computation depends on the value of $d \mod 4$.
When the reminder is $1$, the error occurs.

\subsection{Evaluation for more Boolean Masking Schemes}

To demonstrate the applicability of {\tool} on a wider range of Boolean masking schemes, we further consider
glitch-resistant Boolean masking schemes: HPC1, HPC2~\cite{cassiers2020hardware}, DOM~\cite{gross2016domain} and CMS~\cite{reparaz2015consolidating}.
We implement the finite-field multiplication \texttt{sec\_mult} using those masking schemes, as well as
masked versions of AES S-box and PRESENT S-box.
We note that our implementation of DOM \texttt{sec\_mult} is derived from \cite{cassiers2020hardware},
and we only implement the 2nd-order CMS \texttt{sec\_mult} due to the difficulty of implementation.
All other experimental settings are the same as in Section \ref{sec:scale}.

\begin{table}[t]
	\centering
	\caption{Results on \texttt{sec\_mult} and S-Boxes for HPC, DOM and CMS.	
    \label{tab:exp-hpc-dom-cms}}\vspace{-1mm}
    \setlength{\tabcolsep}{1.6pt}
    \scalebox{0.81}{
	\begin{tabular}{|c|c|c|c|c|c|c|c|c|c|c|c|c|c|c|c|c|}
        \hline
       \multirow{2}{*}{\diagbox{Ref.}{$d$}}     & \multicolumn{6}{c|}{\texttt{sec\_mult}} & \multicolumn{5}{c|}{\texttt{sec\_present\_sbox}} & \multicolumn{5}{c|}{\texttt{sec\_aes\_sbox}}\\
        \cline{2-17}
         & 0 & 1 & 2 & 3 & 4 & 5 
        & 1 & 2 & 3 & 4 & 5
        & 1 & 2 & 3 & 4 & 5 \\ 
        \hline
        HPC1~\cite{cassiers2020hardware} & 28ms & 30ms & 32ms & 35ms & 39ms & 42ms
        & 63ms & 72ms & 84ms & 98ms & 117ms
        & 104ms & 254ms & 1.8s & 13s & 67s \\ 
        HPC2~\cite{cassiers2020hardware} & 23ms & 25ms & 26ms & 28ms & 31ms & 33ms
        & 57ms & 66ms & 75ms & 92ms & 110ms
        & 92ms & 244ms & 1.9s & 13s & 65s \\ 
        DOM~\cite{gross2016domain} & 24ms & 24ms & 25ms & 26ms & 28ms & 29ms
        & 52ms & 60ms & 67ms & 77ms & 90ms
        & 80ms & 223ms & 1.8s & 12s & 66s \\ 
        CMS~\cite{reparaz2015consolidating} & -- & -- & 24ms & -- & -- & -- 
        & -- & 53ms & -- & -- & --
        & -- & 211ms & -- & -- & -- \\ 
        \hline
	\end{tabular}
    }
\end{table}

The results are shown in Table \ref{tab:exp-hpc-dom-cms}.
Our term rewriting system \emph{solely} is able to efficiently prove the correctness of finite-field multiplication \texttt{sec\_mult}, masked versions of AES S-box and PRESENT S-box
using the glitch-resistant Boolean masking schemes HPC1, HPC2, DOM and CMS. The verification cost of those benchmarks
is similar to that of benchmarks using the ISW scheme, demonstrating the applicability of {\tool} for various Boolean masking schemes.


\subsection{Evaluation for Arithmetic/Boolean Masking Conversions}\label{Bitslice}

\begin{table}[t]
	\centering\setlength{\tabcolsep}{1pt}
	\caption{Results on \texttt{sec\_add}, \texttt{sec\_add\_modp} and \texttt{sec\_a2b}~\cite{bronchain2022bitslicing}, where T.O. means time out (20 minutes).
    \label{tab:exp-sec-add}}\vspace{-1mm} 
    \scalebox{0.66}{
	\begin{tabular}{|c|c|c|c|c|c|c|c|c|c|c|c|c|c|c|c|c|c|c|c|c|c|c|c|c|c|c|c|c|}
        \hline
      \multirow{2}{*}{\diagbox{$d$}{$k$}}   & \multicolumn{7}{c|}{\texttt{sec\_add}} & \multicolumn{6}{c|}{\texttt{sec\_add\_modp}}
      & \multicolumn{7}{c|}{\texttt{sec\_a2b}}\\
        \cline{2-21}
        & 2 & 3 & 4 & 6 & 8 & 12 & 16 &
        2 & 3 & 4 & 6 & 8 & 12 &
        2 & 3 & 4 & 6 & 8 & 12 & 16\\  \hline
       1  & 34ms & 38ms & 42ms & 51ms & 61ms & 83ms & 109ms &
       97ms & 248ms & 805ms & 7.5s & 44s & 623s &
       41ms & 48ms & 55ms & 70ms & 87ms & 121ms & 156ms\\ \hline
       2  & 35ms & 40ms & 45ms & 55ms & 65ms & 91ms & 124ms &
       111ms & 331ms & 1.1s & 11s & 67s & 936s &
       58ms & 74ms & 93ms & 134ms & 199ms & 523ms & 1.5s\\ \hline
       3  & 36ms & 42ms & 47ms & 58ms & 71ms & 100ms & 139ms &
       127ms & 417ms & 1.5s & 15s & 89s & T.O. &
       73ms & 93ms & 118ms & 182ms & 293ms & 927ms & 3.0s\\ \hline
       4 & 38ms & 44ms & 50ms & 62ms & 76ms & 109ms & 155ms &
       144ms & 506ms & 1.9s & 18s & 112s & T.O. &
       93ms & 130ms & 190ms & 676ms & 3.3s & 49s & 366s\\ \hline
       5  & 39ms & 45ms & 51ms & 66ms & 81ms & 118ms & 168ms &
        160ms & 586ms & 2.2s & 22s & 136s & T.O. &
        109ms & 159ms & 256ms & 1.1s & 6.5s & 100s & 746s\\
        \hline
	\end{tabular}
    }
\end{table}

To demonstrate a wider applicability of {\tool} 
other than masked implementations of symmetric cryptography, we further evaluate {\tool} on three key non-linear building blocks for bitsliced, masked implementations of lattice-based post-quantum key encapsulation mechanisms (KEMs~\cite{bronchain2022bitslicing}).
Note that KEMs are a class of encryption techniques designed to secure symmetric cryptographic key material for transmission using asymmetric (public-key) cryptography.
We implement the Boolean masked addition modulo $2^k$ (\texttt{sec\_add}), Boolean masked addition modulo $p$ (\texttt{sec\_add\_modp}) and the arithmetic-to-Boolean masking conversion modulo $2^k$ (\texttt{sec\_a2b}) for various bit-width $k$ and masking order $d$, where $p$ is the largest prime number less than $2^k$.
Note that some bitwise operations (e.g., circular shift) are expressed by affine transformations,
and the modulo addition is implemented by the simulation algorithm~\cite{bronchain2022bitslicing} in our implementations.

The results are reported in Table \ref{tab:exp-sec-add}.
{\tool} is able to efficiently prove the correctness of these functions with various masking orders ($d$) and bit-width ($k$), using the term rewriting system \emph{solely}.
With the increase of the bit-width $k$ (resp.\ masking order $d$), the verification cost increases more quickly
for \texttt{sec\_add\_modp} (resp. \texttt{sec\_a2b}) than for \texttt{sec\_add}. This is
because \texttt{sec\_add\_modp} with bit-width $k$ invokes \texttt{sec\_add} three times, two of which have the bit-width $k+1$, and the number of calls to \texttt{sec\_add} in \texttt{sec\_a2b} increases with the masking order $d$
though using the same bit-width as \texttt{sec\_a2b}. These results demonstrate the applicability of {\tool} for asymmetric cryptography.


\section{Conclusion}\label{sec:concl}
We have proposed a term rewriting based approach to proving
functional equivalence between masked cryptographic programs and their original
unmasked algorithms over $\mathbb{GF}(2^n)$.
Based on this approach, we have developed a tool {\tool} and carried out extensive experiments on various benchmarks. Our evaluation confirms the effectiveness, efficiency and applicability of our approach.

For future work, it would be interesting to further investigate the theoretical properties of the term rewriting system. 
Moreover, we believe the term rewriting approach extended with more operations may have a greater potential in verifying more general cryptographic programs, e.g., those from the standard software library such as OpenSSL.


\bibliographystyle{splncs04}
\bibliography{refs}



\end{document}